\numberwithin{equation}{section}
\newcommand{\matern}{Mat\'{e}rn}
\newcommand{\ba}{\begin{array}}
\newcommand{\ea}{\end{array}}
\newcommand{\bea}{\begin{eqnarray}}
\newcommand{\eea}{\end{eqnarray}}
\newcommand{\beas}{\begin{eqnarray*}}
\newcommand{\eeas}{\end{eqnarray*}}
\newcommand{\be}{\begin{equation}}
\newcommand{\ee}{\end{equation}}
\newcommand{\bes}{\[}
\newcommand{\ees}{\]}
\newcommand{\bc}{\begin{center}}
\newcommand{\ec}{\end{center}}
\newcommand{\ben}{\begin{enumerate}}
\newcommand{\een}{\end{enumerate}}
\newcommand{\bi}{\begin{itemize}}
\newcommand{\ei}{\end{itemize}}
\newenvironment{BraceEquationArray}{
                          \begin{eqnarray}
                          \left\{
                          \begin{array}{rclr}
                         }{
                          \end{array}
                          \right.
                          \end{eqnarray}
                         }
\newenvironment{BraceEquationArrayS}{
                          \begin{eqnarray*}
                          \left\{
                          \begin{array}{rclr}
                         }{
                          \end{array}
                          \right.
                          \end{eqnarray*}
                        }
\newenvironment{EquationArray}{
                          \begin{eqnarray}
                          \begin{array}{rclr}
                         }{
                          \end{array}
                          \end{eqnarray}
                         }
\newenvironment{EquationArrayS}{
                          \begin{eqnarray*}
                          \begin{array}{rclr}
                         }{
                          \end{array}
                          \end{eqnarray*}
                         }
\newcommand{\bBEA}{\begin{BraceEquationArray}}
\newcommand{\eBEA}{\end{BraceEquationArray}}
\newcommand{\bBEAS}{\begin{BraceEquationArrayS}}
\newcommand{\eBEAS}{\end{BraceEquationArrayS}}
\newcommand{\bEA}{\begin{EquationArray}}
\newcommand{\eEA}{\end{EquationArray}}
\newcommand{\bEAs}{\begin{EquationArrayS}}
\newcommand{\eEAs}{\end{EquationArrayS}}
\newcommand{\ve}{\varepsilon}
\newcommand{\goto}{\rightarrow}
\newcommand{\brmk}{\begin{remark}\begin{em}}
\newcommand{\ermk}{\end{em}\end{remark}}
\newcommand{\bexa}{\begin{example}\per\begin{em}}
\newcommand{\eexa}{\end{em}\end{example}}
\newcommand{\R}{\mathbb R}
\newcommand{\V}{{\rm Var}}
\newcommand{\C}{{\rm Cov }}
\newcommand{\bbeta}{{\boldsymbol\beta}}
\newcommand{\bgamma}{{\boldsymbol\gamma}}
\newcommand{\bxi}{{\boldsymbol\xi}}
\newcommand{\bomega}{{\boldsymbol\omega}}
\newcommand{\bzeta}{{\boldsymbol\zeta}}
\newcommand{\bTheta}{{\mathbf{\Theta}}}
\newcommand{\bPhi}{{\mathbf{\Phi}}}
\newcommand{\bpsi}{{\boldsymbol\psi}}
\newcommand{\s}{{\mathbf{s}}}
\newcommand{\z}{{\mathbf{z}}}
\newcommand{\capx}{{\mathbf{X}}}
\newcommand{\capr}{{\mathbf{R}}}
\newcommand{\capa}{{\mathbf{A}}}
\newcommand{\capm}{{\mathbf{M}}}
\newcommand{\rtmri}{\big(\capr^{*^ \top} \capm \capr^*\big)^{-1}}
\newcommand{\rtdmjr}{\capr^{*^ \top}\frac{\partial \capm}{\partial m_j}\capr^*}
\newcommand{\rtdmkr}{\capr^{*^ \top}\frac{\partial \capm}{\partial m_k}\capr^*}
\newcommand{\rtmp}{\capr^{*^ \top}\capm \bPhi^*}
\newcommand{\rtddmr}{\capr^{*^ \top}\frac{\partial^2 \capm}{\partial m_j \partial m_k}\capr^*}
\newcommand{\rtdmjp}{\capr^{*^ \top}\frac{\partial \capm}{\partial m_j}\bPhi^*}
\newcommand{\rtdmkp}{\capr^{*^ \top}\frac{\partial \capm}{\partial m_k}\bPhi^*}
\newcommand{\rtddmp}{\capr^{*^ \top}\frac{\partial^2 \capm}{\partial m_j \partial m_k}\bPhi^*}
\DeclareMathOperator*{\argmin}{argmin}
\newtheorem{lem}{Lemma}
\newtheorem{cond}{Condition}
\newtheorem{defin}{Definition}
\begin{document}

\title{Measurement error in two-stage analyses, with application to air pollution epidemiology }
\author{Adam A. Szpiro and Christopher J. Paciorek}

\maketitle

\begin{abstract}

Public health researchers often estimate health effects of exposures (e.g., pollution, diet, lifestyle) that cannot be directly measured for study subjects. A common strategy in environmental epidemiology is to use a first-stage (exposure) model to estimate the exposure based on covariates and/or spatio-temporal proximity and to use predictions from the exposure model as the covariate of interest in the second-stage (health) model. This induces a complex form of measurement error. We propose an analytical framework and methodology that is robust to misspecification of the first-stage model and provides valid inference for the second-stage model parameter of interest. 

We decompose the measurement error into components analogous to classical and Berkson error and characterize properties of the estimator in the second-stage model if the first-stage model predictions are plugged in without correction. Specifically, we derive conditions for compatibility between the first- and second-stage models that guarantee consistency (and have direct and important real-world design implications), and we derive an asymptotic estimate of finite-sample bias when the compatibility conditions are satisfied. We propose a methodology that (1) corrects for finite-sample bias and (2) correctly estimates standard errors. We demonstrate the utility of our methodology in simulations and an example from air pollution epidemiology.

\end{abstract}

\section{Introduction}
\label{se:intro}

We consider measurement error that results from using predictions from a first-stage statistical model as the covariate of interest (the exposure) in a second-stage association study. Regardless of the exposure prediction model, there will be measurement error from the difference between predictions and the unmeasured true values. In contrast with standard measurement error models and the usual classification into classical and Berkson error, such predictions induce a complicated form of measurement error that is heteroscedastic and correlated across study subjects \citep{Gryparis2009,Szpiro2011biostats}. Our objectives are to
characterize the effects of this error, to give guidelines for study design to minimize the impact, and to provide a correction method that reduces bias and gives valid confidence intervals.  

In this section, we review the literature on measurement error correction for air pollution cohort studies and describe how our approach advances the state of the art (Section~\ref{se:intro1}), comment on connections between our work and fundamental statistical issues concerning  
the interpretation of random effects models and the interplay between random vs. fixed covariate  
regression and misspecified mean models (Section~\ref{se:intro2}), and outline the main sections of the paper (Section~\ref{se:intro3}).

\subsection{Measurement error} 
\label{se:intro1}
There has been extensive research on measurement error \citep{Carroll2006}, but the statistical literature has only recently begun to deal with the problem presented here. For spatial exposure contrasts, we have generalized the standard categories by decomposing the measurement error into a Berkson-like component from smoothing the exposure surface and a classical-like component from variability in estimating exposure model parameters  \citep{Gryparis2009,Szpiro2011biostats,Sheppard2011}. 
We and others have also shown that the parametric bootstrap (or a computationally efficient approximation to the parametric bootstrap) 
can be used to correct for the effects of measurement error  \citep{Madsen2008, lopiano2011comparison,Szpiro2011biostats}.   However, validity 
of these results depends crucially on having a correctly specified exposure model. In practice such models are developed for predictive performance and often use predictors based on convenience, so we believe misspecification is ubiquitous. 
A distinguishing feature of our methodology in this paper is that it is robust to misspecification of the mean and/or variance in the exposure model and still provides valid second-stage inference. 

We focus on two-stage analysis, as this is a common and practical approach when exposure is not directly measured. An alternative is a unified analysis in which the exposure model is a component of a joint model for the exposure and health data (e.g., \cite{Sinha2010} in nutritional epidemiology and \cite{Gryparis2009} in air pollution epidemiology), but this type of joint model has several difficulties. First, it presupposes that one has a correct (or at least nearly correct) exposure model; we argue that an exposure model can generally only capture a portion of the full exposure and should be treated in this light. Second, outlying second-stage data may influence estimation of the exposure model in unexpected ways, especially when the second-stage model is misspecified (noting at the same time that this feedback is an essential aspect of a coherent joint model and leads to increased efficiency). Third, the same exposure data are often used with multiple second-stage outcome data, and it is scientifically desirable to use the same predicted exposures across studies.  Finally, exposure modeling can be computationally demanding, involving spatial and spatio-temporal prediction, so pursuing a two-stage strategy has practical appeal.
For further elaboration of these points, see \citet{bennett2001errors}, \citet{wakefield2006health}, \citet{Gryparis2009}, \citet{lunn2009combining}, and \citet{Szpiro2011biostats}.

We and others have also evaluated standard correction methods such as regression calibration, including using personal measurements as validation data \citep{Gryparis2009, Spiegelman2010, Szpiro2011biostats}. Performance in the spatial setting is mixed, most likely because the error structure differs substantially from classical measurement error. 
The methodology we describe here directly accounts for the spatial characteristics of measurement error and relies on statistical estimates of uncertainty from the exposure model rather than validation data.

A key application, and the one that motivates this work, is studying the health effects of chronic exposure to ambient air pollution.  Long-term air pollution exposure has been linked with
increased cardiovascular morbidity and mortality in prominent studies that form part of the basis for regulations with broad economic impact \citep{Dockery1993,Pope2002,Peters2002}.  
Early air pollution cohort studies focused on mortality \citep{Dockery1993,Pope2002}, while
more recent work has shown associations with non-fatal cardiovascular events \citep{Miller2007} and sub-clinical indicators of
disease \citep{Kunzli2005,VanHee2009,Adar2010,van2011association}. In general, the health risk of air pollution to any single individual is thought to be small, but there are important public health implications because of the large number of people exposed and the ability of governments to mitigate exposure
through regulatory action \citep{Pope2006}. 

In air pollution studies, exposure modeling is motivated by the desire to estimate intra-urban (i.e., within a metropolitan area) variation in exposure, which is 
more difficult to quantify than inter-urban pollution contrasts. There are significant advantages to exploiting intra-urban contrasts, as this can increase statistical power to detect health effects, help rule out unmeasured confounding by city or region, and improve our ability to differentiate between the effects of different pollutants or pollutant components. Pollution data are typically available from regulatory and research monitoring networks but not from long-term residential or personal monitoring of individuals participating in observational health studies, leading to a spatial misalignment problem. Typical exposure prediction models rely on monitoring data in a regression with geographically-varying covariates and smoothing by splines or kriging \citep{Fanshawe2008,Jerrett2005a, Hoek2008, Su2009, Yanosky2009, Szpiro2009, Brauer2010}. Standard practice is to select an exposure model with good prediction accuracy, treat the predicted exposures as known, and plug them into a health model to estimate the association of interest without accounting for measurement error \citep{Jerrett2005, Kunzli2005,kim2009,Puett2009,Adar2010,van2011association}.

While motivated by air pollution epidemiology, the core 
measurement error 
ideas in this paper have much broader relevance. Indeed, \cite{Prentice2010} (in the 2008 RA Fisher lecture at the Joint Statistical Meetings) states that ``measurement error in exposure assessment may be a potentially dominating source of bias in such important prevention research areas as nutrition and physical activity epidemiology.''  It is essential to better understand the implications of measurement error in a wide variety of applications in which one must first estimate exposure. These applications include (1) nutritional epidemiology, (2) physical activity epidemiology, (3) a environmental and occupational epidemiology, (4) exposure to disease vectors or infectious agents, and (5) two-stage analyses in functional data contexts.  Statistical exposure models are commonly used in environmental and occupational epidemiology \citep{Dement1983,Preller1995,Stram1999, Ryan2007, Slama2007}, with kriging and land use regression particularly popular in air pollution research. 
More generally, proxy data are becoming increasingly available and a natural idea in many contexts is to model an exposure of interest given publicly available data. Such data could include remote sensing from satellites or large networks of inexpensive sensors deployed to measure physical phenomena. 

\subsection{Connections to other fundamental statistical issues}
\label{se:intro2}
In addition to advancing measurement error research, our development emphasizes the relationships between certain foundational issues in applied statistics that are of current interest in the field, specifically 
the interpretation of random effects models and the interplay between random vs. fixed covariate  
regression and misspecified mean models.  

As discussed in Section~\ref{se:dgm}, we have chosen to condition on a fixed
but unknown spatial air pollution surface, rather than taking the more conventional geostatistical
approach of modeling
an unknown spatial surface as a random effect or spatial random field \citep{Cressie1993, Banerjee2004}. 
The repercussions of this decision are related to the more general question of how to interpret random effects models in light of reasonable assumptions about the true data-generating mechanism, and whether this terminology is even adequate for describing the range of problems to which random effects-based algorithms are currently applied \citep{Gelman2005, Hodges2010}.  Indeed, in 
a new book on richly parameterized models, \citet{Hodges2013} points out that our 
particular 
modeling framework
illustrates an important practical difference in inferential methodology between what he calls `old' and `new' style random effects.  

As discussed in Sections~\ref{se:dgm}--\ref{se:exposEst}, we regard the entire unknown exposure surface as part of the mean
in a finite rank regression, rather than allocating the spatial component to the variance by means of a random effect, so we must
address the consequences of a misspecified mean model.
In addition, we regard the exposure monitor locations as random rather than fixed (since they could 
presumably vary between hypothetical repeated experiments), so we are in the
setting of random covariate regression with a misspecified mean model.
\citet{Buja2013} and \citet{Szpiro2010a} have recently discussed some implications of the distinction
between fixed and random covariates when the mean model is misspecified.  
In fact, the seminal paper by \citet{White1980} on sandwich covariance estimators includes the case of a misspecified mean model, but perhaps in part because the title focuses on heteroscedasticity, applications of the sandwich estimator tend to focus only on the importance of non-constant variance.
One often neglected consequence of the ``conspiracy of
model violation and random X'' \citep{Buja2013} that is important in our development is that regression parameter
estimates are not quite unbiased.  We provide an approach to characterizing and estimating the asymptotic bias 
(see equation (\ref{eq:biasgammahat.body}) and the surrounding discussion) that is, as far as we know, novel.

\subsection{Outline of paper}
\label{se:intro3}
Section 2 presents our basic framework, a key feature of which is that it avoids the assumption that the exposure model is correct and instead projects exposure data 
into a lower dimensional space.  We present conditions on the compatibility of the first and second stage designs that have important real-world design and analysis implications.
Section 3 decomposes the resulting measurement error into Berkson-like and classical-like components. Under the compatibility conditions, we show that there is 
essentially 
no bias from the Berkson-like error, although this component of the error still increases variability of second-stage effect estimates.  We then derive asymptotic estimates of the bias and variance caused by the classical-like error. 
Section 4 describes our measurement error correction approach, wherein we correct for bias from the classical-like error using our asymptotic results and estimate the uncertainty, including that from both sources of measurement error, using 
a form of the nonparametric bootstrap. 
Sections 5 and 6 present simulations and an example application to the Multi-Ethnic Study of Atherosclerosis and Air Pollution (MESA Air).

\section{Analytical framework}

\subsection{Data-generating mechanism}
\label{se:dgm}

We develop an analytic framework for air pollution cohort study data
that also more generally illustrates how one can use a measurement
error paradigm to formalize two-stage analysis with a misspecified
first-stage model.  While previous work has modeled the spatial
variation in air pollution as a random field
\citep{Gryparis2009,Szpiro2009,Szpiro2011biostats}, we regard the
spatial surface itself as fixed and treat the data locations as
stochastic.  This avoids the philosophical difficulties inherent in
attributing spatial structure of long-term average air pollution to a
stochastic spatial process that would be different in a hypothetical
repeated experiment.  
Long-term average air pollution concentrations
over one or more years
are predominately determined by fixed but complex climatological,
economic, and geographic systems, so it is scientifically preferable
to regard the unknown surface as deterministic.
Thus, we condition on the fixed physical world in the time period of the study and consider a repeated
sampling framework in which observations might have been collected at
different locations according to a (not necessarily known) study
design. In Section~\ref{se:disc}, we discuss the implications of this approach when considering shorter-term air pollution exposures. 

More formally, consider an association study with health outcomes $y_i$ and corresponding exposures $x_i$ for subjects $i=1,\ldots,n$ at geographic  locations
$\s_i \in \R^2$,  with additional health model covariates $\z_i =(z_{i1},\ldots,z_{ip}) \in \R^p$, 
including an intercept. Consider a linear model,
\be
\label{eq:healthlin}
y_i=x_i\beta+\z_i \bbeta_z +\epsilon_i,
\ee
where conditional on covariates the $\epsilon_i$ are independent but not necessarily identically distributed,
satisfying $E(\epsilon_i)=0$.
Our target of inference is the health effect parameter, $\beta$.
If the $x_i$ and $\z_i$  were observed without error, inference for $\beta$ would be routine by ordinary least squares (OLS)
and sandwich-based standard error estimates \citep{White1980}.
We are interested in the situation where the $y_i$ and $\z_i$ are observed for all subjects, but instead of the actual subject exposures we observe 
monitoring data, $x^*_j$, for $j=1,\ldots,n^*$, at different locations $\s^*_j$. Nonlinear health models are of course important and are the subject of ongoing research, but the linear setting is helpful for developing the general framework and our specific asymptotic results.

We emphasize that we regard the spatial locations  $\s_i$ and $\s^*_j$ of study subjects  and monitors as realizations of spatial random variables.  
The locations are chosen at the time of the study design, and it is natural to regard them as stochastic in order to address the statistical question of how the
estimates of $\beta$ would vary if different locations were selected according to similar criteria. 
Thus, in our development we assume the $\s_i$ and $\s^*_j$  
are distributed in $\R^2$ with unknown densities $g(\s)$ and  $h(\s)$, respectively, and corresponding distribution functions $G(\s)$ and $H(\s)$.
Throughout, we assume the subject locations are chosen independently of the monitoring locations. To simplify the exposition, we further assume in Sections~\ref{se:decomp} and~\ref{se:correction}
that both sets of locations are i.i.d.  It is straightforward to account for clustering of subject or monitor locations; see, for example, the simulation study in Section~\ref{se:spatial.ex} 
and the data analysis in Section~\ref{se:mesa}.

Conditional on the $\s_i$, we assume the $x_i$ satisfy
\bes
x_i = \Phi(\s_i) + \eta_i,
\ees
with i.i.d. mean zero $\eta_i$.  The function $\Phi(\s)$ is a deterministic spatial surface that is potentially predictable
by covariates and spatial smoothing, and the $\eta_i$  represent 
variability between exposures for subjects at the same physical location. 
We assume an analogous model for the monitoring data at locations $\s^*_j$, with the same deterministic spatial field $\Phi(\s_j^*)$ and with instrument
error represented by $\eta_j^*$ having variance $\sigma^2_{\eta^*}$. 

Finally, we assume the additional health model covariates $\z_i$ satisfy
\bes
\z_i = \bTheta(\s_i) + \bzeta_i,
\ees
where  $\bTheta(\s)=(\theta_1(\s),\ldots,\theta_p(\s))$ is a $p$-dimensional vector-valued function representing the spatial
component of the additional covariates, and which includes the intercept, and the $\bzeta_i=(\zeta_{i1},\ldots,\zeta_{ip})$ are random $p$-vectors 
independent between subjects and independent of the $\eta_i$.  Each component of $\bzeta_i$ has mean zero, but the components
of $\bzeta_i$ are not necessarily independent of each other.
To illustrate, one additional health model covariate might be household income, decomposed into spatial variation representing the socioeconomic status of the neighborhood and the residual variation between residences.

\subsection{Exposure estimation}
\label{se:exposEst}
Standard practice is to derive a spatial estimator of exposure $\hat{w}(\s)$ based on the monitoring data and then
to use the $\hat{w}(\s_i)$ in place of the $x_i$ in~(\ref{eq:healthlin}) to estimate $\beta$. 
We consider a hybrid regression (on geographically-defined covariates) and regression spline exposure model. 
Thus, we let $\capr(\s)$ be a known function from $\R^2$ to $\R^r$ that incorporates $q$ covariates and $r-q$
spline basis functions.  If we knew the least-squares fit of the exposure surface with respect to the density of subject locations $g(\s)$,
\be
\label{eq:gamma}
\bgamma = \argmin_\bxi \int \big(\Phi(\s)-\capr(\s)\bxi\big)^2 dG(\s),
\ee
it would be natural to approximate $x_i$ by $w(\s_i)=\capr(\s_i)\bgamma$.
Notice that we do not assume the spatial basis
is sufficiently rich to represent all of the structure in $\Phi(\s)$, so we allow for misspecification in the sense that $\Phi(\s) \neq  w(\s)$ for some $\s \in \R^2$,
for any choice of $\bgamma$. 

We do not know $\bgamma$, so we will estimate it from the monitoring data by $\hat{\bgamma}$ and then use the estimated exposure, $\hat{w}(\s_i)=\capr(\s_i)\hat{\bgamma}$,
in place of  $x_i$.  In particular, we derive $\hat{\bgamma}$ by OLS
\be
\label{eq:gammahat2}
\hat{\bgamma} = \argmin_\bxi \sum_{j=1}^{n^*} \left(x_i^*-\capr(\s^*_j)\bxi\right)^2.
\ee
Under standard regularity conditions \citep{White1980}, $\hat{\bgamma}$ is asymptotically normal and converges a.s. to $\bgamma^*$ as $n^*\goto\infty$,
where $\bgamma^*$ is the solution to (\ref{eq:gamma}) with $H(\s)$ in place of $G(\s)$.
In Section~\ref{se:compatibility}, we discuss the implications of distinct reference distributions in (\ref{eq:gamma}) and (\ref{eq:gammahat2}).

\subsection{Exposure model choice}
\label{se:compatibility}
So far we have taken $\capr(\s)$ to be a known function from $\R^2$ to $\R^r$,
encoding a set of decisions about which covariates and spline basis functions 
to include in the exposure model. 
Indeed, model selection is a complex task that involves trading off 
flexibility 
(advantageous for modeling as much of the true exposure surface as possible) and
parsimony (advantageous for reducing estimation error).  
We begin by specifying compatibility conditions for the first-stage exposure model that are needed to guarantee consistent estimation of $\beta$ in the second-stage health model.  The following two conditions are sufficient,
and we will discuss their motivation further in Section~\ref{se:decomp}.
\begin{cond}
\label{cond1}
The probability distribution of $\capr(\s)$ is the same if $\s$ is sampled from $G(\s)$ or $H(\s)$.
\end{cond}
\begin{cond}
\label{cond2}
The span of $\capr(\s)$ includes the elements of $\bTheta(\s)$, $\theta_k(\s), k=1,\ldots,p$, the  spatially structured components of the additional health model covariates. 
\end{cond}
Note that Condition~\ref{cond1} is satisfied if 
the probability distributions of subject and monitor locations are identical, i.e., $g(\s)=h(\s)$ for all $\s$.
Visual inspection on a map can be useful for verifying that $g(\s)$ and $h(\s)$ 
represent similar spatial patterns that are relevant for spline functions, but individual geographic covariates
may have very fine spatial structure, so it is also useful to examine the values of these geographic covariates at subject and monitor locations.  If a particular covariate has noticeably different distributions
in the two populations, then it should not be included in $\capr(\s)$ (see, for example, the discussion of the MESA data analysis in Section~\ref{se:mesa}). 

Selecting $\capr(\s)$ to satisfy Condition~\ref{cond2} implicitly requires that $\bTheta(\s)$ be defined at all locations in the supports of $g(\s)$ and $h(\s)$.
If $g(\s)=h(\s)$ for all $\s$, then this is automatically true since $\bTheta(\s)$ is defined at all locations where it is possible
for study subjects to be located.

Beyond the compatibility conditions above, there is a sizable and relevant statistical literature
on methods for maximizing out-of-sample prediction accuracy, which for spline models 
amounts to selecting the number of basis functions and locations of knots 
or selecting a penalty parameter 
\citep{Hastie2001, ruppert2003semiparametric}.
In our setting, improved accuracy of exposure model predictions will often correspond to improved
efficiency in estimating $\beta$, although this is not always the case \citep{Szpiro2011epi}.
We comment further on the tradeoff between exposure model complexity and parsimony
in Section~\ref{se:disc}, but a specific algorithm for selecting geographic 
covariates or spline basis functions is beyond the scope of this paper.


\section{Measurement error}
\label{se:decomp}

Let $\hat{\beta}_{n,n^*}$ be the health effect estimate obtained from the OLS solution to (\ref{eq:healthlin}) using $\hat{w}(\s_i)$ estimated from $n^*$ monitoring locations in place of $x_i$, for study subjects $i=1,\ldots,n$. This estimator is affected by two fundamentally different types of measurement error: Berkson-like and classical-like components \citep{Szpiro2011biostats}. Defining $w^*(\s_i)=\capr(\s_i)\bgamma^*$, we can express the measurement error, $u_{i}=x_i-\hat{w}(\s_i)$, as
\bea
u_{i}&=&\big(x_i- w^*(\s_i)\big) + \big(w^*(\s_i)-\hat{w}(\s_i) \big)  \nonumber\\
&=&u_{i,BL}+u_{i,CL} \label{eq:decomp}.
\eea
The Berkson-like component, $u_{i,BL}$, is the information lost from smoothing even with unlimited monitoring data (a form of exposure model misspecification), and the 
classical-like component, $u_{i,CL}$, is variability that arises from estimating the parameters of the exposure model based on monitoring data at $n^*$ locations.

The designation of $u_{i,BL}$ as Berkson-like error refers to the fact that
this is part of the true  exposure surface that our model
is unable to predict, even in an idealized situation with unlimited monitoring data.
As such, it results in predictions that are less variable than truth. In Section~\ref{se:uBL} we consider the impact of the Berkson-like error alone and demonstrate asymptotic unbiasedness
for large $n$ in Lemma~\ref{le:ubl},
assuming the compatibility conditions of Section~\ref{se:compatibility} are satisfied. This result motivates the need for the compatibility conditions, but it is not used directly in our measurement error methodology
in Section~\ref{se:correction}. 
Our consistency result in Lemma~\ref{le:ubl} is analogous to Lemma~1 in \citet{White1980}, indicating that finite sample bias occurs in generic random covariate regression even in the absence of measurement error. Here we regard this bias as negligible, because in public health contexts $n$ is often relatively large, particularly compared to $n^*$. Although $u_{i,BL}$ alone does not induce important bias, it does inflate the variability of health
effect estimates, and we account for this with the nonparametric bootstrap in our proposed measurement error methodology in Section~\ref{se:correction}. 


Classical-like measurement
error, $u_{i,CL}$, results from the finite $n^*$ variability of $\hat{\bgamma}$
as an estimator of $\bgamma^*$. As discussed by \citet{Szpiro2011biostats}, it is similar to classical measurement error in the sense that it contributes
additional variability to exposure estimates that is not related to the outcome.  Like classical measurement error,
$u_{i,CL}$ introduces bias in estimating $\beta$ and affects the standard error, but it is not the same as classical measurement error because it is heteroscedastic and shared between subjects. In Section~\ref{se:uCL} we estimate the bias from classical-like measurement error 
(still under the  the compatibility conditions of Section~\ref{se:compatibility}). 
This estimate will provide a means 
to correct for bias as part of our measurement error methodology in Section~\ref{se:correction}.

\subsection{Berkson-like error ($u_{i,BL}$)}
\label{se:uBL}

Considering our estimator, $\hat{\beta}_{n,n^*}$, we isolate the impact of $u_{i,BL}$ by operating in the $n^*= \infty$ limit 
with $w^*(\s_i)=\capr(\s_i)\bgamma^*$ and analyzing 
the behavior of $\hat{\beta}_{n,\infty}$. 
The following lemma holds under sufficient regularity of $g(\s)$, $h(\s)$, $\Phi(\s)$, and $\capr(\s)$.
We include the proof here because it is helpful for understanding the importance of the compatibility conditions in Section~\ref{se:compatibility}.

\begin{lem}
\label{le:ubl}
Assuming Conditions~\ref{cond1} and~\ref{cond2}, $\hat{\beta}_{n,\infty}$
converges a.s. to $\beta$ as $n\goto \infty$.
\end{lem}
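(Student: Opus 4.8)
The plan is to show that $\hat{\beta}_{n,\infty}$, the OLS estimator of $\beta$ in the regression of $y_i$ on $(\hat{w}(\s_i),\z_i)$ with $\hat{w}(\s_i)$ replaced by its $n^*=\infty$ limit $w^*(\s_i)=\capr(\s_i)\bgamma^*$, converges almost surely to $\beta$. First I would write $\hat{\beta}_{n,\infty}$ in closed form via the usual partitioned-regression / Frisch--Waugh formula: it equals the sample covariance between $y_i$ and the residual of $w^*(\s_i)$ after projecting out $\z_i$, divided by the sample variance of that residual. By the strong law of large numbers (using the i.i.d.\ assumption on the $\s_i$, $\eta_i$, $\bzeta_i$, $\epsilon_i$ and suitable moment conditions bundled into ``sufficient regularity''), both numerator and denominator converge a.s.\ to their population analogues, so it suffices to show that the population version of this ratio is exactly $\beta$.

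The key step is to substitute the data-generating model $y_i = x_i\beta + \z_i\bbeta_z + \epsilon_i$ into the numerator and check that the ``contamination'' terms vanish in the limit. Writing $x_i = \Phi(\s_i)+\eta_i$ and $\z_i = \bTheta(\s_i)+\bzeta_i$, the numerator's limit is $\beta$ times the population covariance of $x_i$ with the $\z$-residualized $w^*(\s_i)$, plus cross terms involving $\bbeta_z$ and $\epsilon_i$. The $\epsilon_i$ term is zero because $E(\epsilon_i\mid\s_i,\z_i)=0$ and $w^*(\s_i)$ is a function of $\s_i$ only. The $\bbeta_z$ term is zero precisely because of Condition~\ref{cond2}: since each $\theta_k(\s)$ lies in the span of $\capr(\s)$, the function $w^*(\s)=\capr(\s)\bgamma^*$ --- after being residualized against $\z_i$, whose spatial part is $\bTheta(\s_i)$ --- has zero population covariance with $\bTheta(\s_i)$, hence with $\z_i$ (the $\bzeta_i$ part is mean-zero and independent of $\s_i$). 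What remains is $\beta$ times the covariance of $\Phi(\s_i)$ with the residualized $w^*(\s_i)$; I would then argue this covariance equals the variance of the residualized $w^*(\s_i)$ itself, so the ratio collapses to $\beta$. This last identity is where Condition~\ref{cond1} does its work: $\bgamma^*$ is defined as the $L^2(H)$-projection coefficient of $\Phi$ onto $\capr$, while the second-stage population moments are taken under $G$; Condition~\ref{cond1} guarantees the distribution of $\capr(\s)$ (and hence the relevant second moments and the projection) is the same under $G$ and $H$, so $\bgamma^*=\bgamma$ and $w^*(\s)=w(\s)=\capr(\s)\bgamma$ is genuinely the $L^2(G)$-projection of $\Phi$. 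The projection property then gives $\int(\Phi(\s)-w(\s))\,\capr(\s)^\top\,dG(\s)=\bzero$, which (again invoking Condition~\ref{cond2} to handle the $\z$-residualization, since it only subtracts off something in the span of $\capr$) yields $\mathrm{Cov}_G(\Phi(\s_i)-w^*(\s_i),\ \tilde{w}^*(\s_i))=0$, where $\tilde{w}^*$ denotes the residualized predictor; equivalently $\mathrm{Cov}_G(\Phi(\s_i),\tilde{w}^*(\s_i))=\V_G(\tilde{w}^*(\s_i))$.

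The main obstacle, I expect, is bookkeeping the interplay of the two reference measures $G$ and $H$ cleanly: one must be careful that $\bgamma^*$ enters through $H$ but every other expectation in the second-stage limit is under $G$, and the argument that Condition~\ref{cond1} forces $\bgamma^*=\bgamma$ needs the normal equations under $H$ to coincide with those under $G$, which requires the joint distribution of $(\capr(\s),\Phi(\s))$ --- not just the marginal of $\capr(\s)$ --- to match, or else a separate argument that only $\capr$-moments matter because $\bgamma^*$ solves a linear system whose right-hand side $\int\Phi(\s)\capr(\s)^\top dH(\s)$ must also be reconciled. I would resolve this by noting that the relevant quantity is really $w^*(\s)$ as an element of the finite-dimensional span of $\capr$, and that Condition~\ref{cond1} plus the definition of $\bgamma^*$ as a linear functional of $\Phi$ through $\capr$ suffices; alternatively, the honest reading is that one needs $\Phi$ itself to be ``seen'' the same way, which the paper may be folding into ``sufficient regularity.'' A secondary, routine obstacle is verifying the SLLN applies --- i.e., that the sample Gram matrix of $(\tilde{w}^*(\s_i),\z_i)$ is a.s.\ eventually invertible and converges to an invertible population limit --- which follows from the regularity assumptions and a continuous-mapping argument.
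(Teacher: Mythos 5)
Your proposal is correct and is essentially the paper's argument in different packaging: the paper groups $\left(\Phi(\s_i)-\capr(\s_i)\bgamma\right)\beta+\eta_i\beta+\epsilon_i$ into a single error term and invokes Lemma~1 of \citet{White1980} after verifying exactly the two orthogonality conditions you verify --- $E\{\capr(\s_i)\bgamma\,(\Phi(\s_i)-\capr(\s_i)\bgamma)\}=0$ from the $L^2(G)$-projection property (available only because Condition~\ref{cond1} forces $\bgamma^*=\bgamma$) and $E\{\theta_k(\s_i)(\Phi(\s_i)-\capr(\s_i)\bgamma)\}=0$ from Condition~\ref{cond2} --- so your Frisch--Waugh residualization plus SLLN is the same proof, modulo one harmless misattribution (the $\bbeta_z$ term vanishes by construction of the residualization, not by Condition~\ref{cond2}; that condition's real work is keeping the residualized predictor inside the span of $\capr(\s)$ so the projection orthogonality applies). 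The obstacle you flag about $\bgamma^*=\bgamma$ is genuine: Condition~\ref{cond1} as stated only matches the marginal law of $\capr(\s)$, which equates the Gram matrices but not $\int\capr(\s)^\top\Phi(\s)\,dH(\s)$ with $\int\capr(\s)^\top\Phi(\s)\,dG(\s)$; the paper's proof simply asserts the implication, and it is airtight only under the stronger (and explicitly sufficient) condition $g=h$, or when $\Phi$ is spanned by $\capr$, as the paper's own remarks following the lemma acknowledge.
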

\begin{proof}
It is easy to see that $\hat{\beta}_{n,\infty}$ is the OLS solution to (\ref{eq:healthlin}) using $w^*(\s_i)=\capr(\s_i)\bgamma^*$ in place of $x_i$.
Condition~\ref{cond1} implies $\bgamma^*=\bgamma$, so we consider the impact of using $w(\s_i)=\capr(\s_i) \bgamma$ as the exposure.  We write
\be
\label{eq:ubl}
y_i=\capr(\s_i) \bgamma \beta + \z_i \bbeta_z + \left(\left(\Phi(\s_i)-\capr(\s_i) \bgamma\right)\beta + \eta_i \beta+  \ve_i\right),
\ee
where the three terms grouped in parentheses are regarded as unobserved error terms.
To apply Lemma~1 from \citet{White1980}, it is sufficient that
\be
E\left\{\capr(\s_i) \bgamma  \times (\Phi(\s_i)-\capr(\s_i) \bgamma)\right\}=0  \label{eq:exporthog}
\ee
and for each $k=1,\ldots,p$
\be
E\left\{\theta_k(\s_i) \times (\Phi(\s_i)-\capr(\s_i) \bgamma)\right\}=0, \label{eq:zorthog}
\ee
where the random sampling of $\s_i$ is according to the density of subject locations, $g(\s)$.   Orthogonality of residuals in the least squares optimization for $\bgamma$ in (\ref{eq:gamma}) implies  (\ref{eq:exporthog}), and Condition~\ref{cond2} implies (\ref{eq:zorthog}) since each $\theta_k(\s)$ can be represented as a linear combination of elements of $\capr(\s)$.  We actually need (\ref{eq:zorthog}) with $z_{ik}=\theta_k(\s_i) + \zeta_{ik}$ in place of $\theta_k(\s_i)$ for Lemma~1 of \citet{White1980}, but this follows 
from (\ref{eq:zorthog}) since $\zeta_{ki}$ has mean zero and is independent of $\s_i$.  
\end{proof}

We comment on the necessity of Conditions~\ref{cond1} and~\ref{cond2}.
The proof of Lemma~\ref{le:ubl} depends on  $\bgamma^* = \bgamma$.  This will always hold if $\Phi(\s)$ is spanned by the $\capr(\s)$, but otherwise we rely on Condition~\ref{cond1}.
If $\bgamma^* \neq \bgamma$,  then (\ref{eq:ubl}) becomes
\be
\label{eq:ubl.mismatch}
y_i=\capr(\s_i) \bgamma^* \beta + \z_i \bbeta_z + \left( \left(\Phi(\s_i)-\capr(\s_i) \bgamma^*\right)\beta + \eta_i \beta+  \ve_i\right).
\ee
We cannot expect that  $\capr(\s_i) \bgamma^*$ is orthogonal to $\left(\Phi(\s_i)-\capr(\s_i) \bgamma^*\right)$
when $\s_i$ is drawn according to the probability density $g(\s)$, since $\gamma^*$ is the least squares fit from (\ref{eq:gamma}) with $H(\s)$
in place of $G(\s)$.  Therefore, treating $\left(\Phi(\s_i)-\capr(\s_i) \bgamma^*\right)$ as part of the random
variation in (\ref{eq:ubl.mismatch}) results in the equivalent of omitted variable bias when estimating $\beta$.

Condition~\ref{cond2} is needed to guarantee (\ref{eq:zorthog}) in the proof of Lemma~\ref{le:ubl}.
The difficulty if (\ref{eq:zorthog}) does not hold is that
$\left(\Phi(\s_i)-\capr(\s_i) \bgamma\right)$ in (\ref{eq:ubl})
may be correlated with one or more elements of the $\bTheta(\s_i)$ component of $\z_i$.  
Intuitively, this can introduce bias because estimation of $\beta$ relies on the variation in $\capr(\s_i)\bgamma$ that is unrelated to the covariates $\z_i$, i.e., the residual variation after projecting onto the span of the elements of $\z_i$, which is equivalent to the span of $\bTheta(\s_i)$.  Without  (\ref{eq:zorthog}), the residual term $\left(\Phi(\s_i)-\capr(\s_i) \bgamma\right)$ in
(\ref{eq:ubl}) need not be orthogonal to this variation.  
Note that the need to include the  covariates from the health model in the exposure model is analogous to the inclusion of covariates in standard regression calibration \citep{Carroll2006}.

\subsection{Classical-like error ($u_{i,CL}$)}
\label{se:uCL}

We will isolate the impact of $u_{i,CL}$ on $\hat{\beta}_{n,n^*}$ by operating in the $n= \infty$ limit, 
corresponding to the entire superpopulation of study subjects, and analyzing 
the asymptotic properties of $\hat{\beta}_{\infty,n^*}$ as $n^* \rightarrow \infty$.  
The exposure model parameter vector, $\hat{\bgamma}$, is asymptotically normal (as discussed in Section~\ref{se:exposEst}) with dimension fixed at $r$, and  $\hat{\beta}_{\infty,n^*}$ is a deterministic function of $\hat{\bgamma}$, 
so under the conditions of Lemma~\ref{le:ubl} a standard delta-method argument can be used to establish that $\hat{\beta}_{\infty,n^*}$ is asymptotically normal with mean
 $\beta$.  
In particular, this implies that bias from classical-like error is asymptotically negligible in the sense that it is of comparable magnitude to the
variance.  This situation contrasts with classical measurement error where 
there are as many random error terms
as observations and there is large-sample bias \citep{Carroll2006}.

Even though the bias term is asymptotically negligible, our simulation studies suggest that it can still be important for moderate size $n^*$, so we will derive a bias correction.  
Since only the variability in the exposure estimate that is orthogonal to covariates from the health model
plays a role in deriving $\hat{\beta}_{\infty,n^*}$, it is helpful in the following analysis to
define $\capr^c(\s)$ with elements  $R_k^c(\s)=R_k(s)-\bTheta(\s)\bpsi_k$, where $\bpsi_k = \argmin_\bomega \int (R_k(\s)-\bTheta(\s)\bomega\big)^2 dG(\s)$. 
Analogous to $\hat{w}(\s)$ and $w(\s)$, we define $\hat{w}^c(\s)=\capr^c(\s)\hat{\bgamma}$ and $w^{c}(\s)=\capr^c(\s)\bgamma$.

Note that the expectation of 
$\hat{\beta}_{\infty,n^*}$  need not be defined for finite $n^*$ because it is a function of  $\hat{\bgamma}$ , and the denominator in the OLS solution for $\hat{\bgamma}$ is not bounded away from zero.
Therefore, we adapt the definition of asymptotic expectation for a sequence of random variables from \citet[page 135]{Shao2010}.  The basic idea is to
identify the highest order term in a power series expansion that has non-zero expectation as the asymptotic expectation.  See a related discussion of concepts of asymptotic bias in \citet[Appendix A.1.2]{Lumley2010}.  

\begin{defin}
\label{def:asympt}
Let $\upsilon_1,\upsilon_2,\ldots$ be a sequence of vector-valued random variables and let $a_1,a_2,\ldots$ be a sequence of positive numbers such that $\lim_{n\goto\infty}a_n = \infty$.
(i) Suppose $\upsilon$ is such that $E|\upsilon|<\infty$ and we can write $\upsilon_n =\tilde{\upsilon}_n+ \upsilon^\prime_n$ with $E(\tilde{\upsilon}_n)$=0 and $\lim_{n\goto \infty} a_n \upsilon^\prime_n \goto_d \upsilon$.  Then we denote
$E_{[a_n]}(\upsilon_n)= E(\upsilon)$ and call $E_{[a_n]}(\upsilon_n)/a_n$ an order $a_n^{-1}$ asymptotic expectation of $\upsilon_n$.
(ii)  Suppose $\upsilon$ is such that $\C(\upsilon)<\infty$ and $\lim_{n\goto \infty} \sqrt{a_n} \upsilon_n \goto_d \upsilon$.  Then we denote
$\C_{[a_n]}(\upsilon_n)= \C(\upsilon)$ and call  $\C_{[a_n]}(\upsilon_n)/a_n$ an order $a_n^{-1}$ asymptotic covariance of $\upsilon_n$.
\end{defin}

\begin{lem}
\label{le:ucl}
Assume sufficient regularity of $g(\s)$, $h(\s)$, $\Phi(\s)$, and $\capr(\s)$ and Conditions~\ref{cond1} and~\ref{cond2}.   If we set
\bea
E_{[n^*]}(\hat{\beta}_{\infty,n^*} - \beta)
&=&\beta \Big \{ -\frac{\int w^{c}(\s) E_{[n^*]} \left(\hat{w}^c(\s)-w^{c}(\s)\right) dG(\s)}{\int w^{c}(\s)^2 dG(\s)} -\frac{\int \V_{[n^*]}\left(\hat{w}^c(\s)\right) dG(\s)}{\int w^{c}(\s)^2 dG(\s)} + \nonumber \\
\label{eq:cl.bias}\\
&&  
2 \frac{\int w^{c}(\s_1) w^{c}(\s_2) \C_{[n^*]} \left(\hat{w}^c(\s_1), \hat{w}^c(\s_2)\right) dG(\s_1) dG(\s_2)}{\left(\int w^{c}(\s)^2 dG(\s)\right)^2} \Big \}\nonumber
\eea
\newline
\noindent 
and  \nonumber
\bea
\V_{[n^*]}(\hat{\beta}_{\infty,n^*})&=&\beta^2 \Big\{\frac{\int w^{c}(\s_1) w^{c}(\s_2) \C_{[n^*]} \left(\hat{w}^c(\s_1), \hat{w}^c(\s_2)\right) dG(\s_1) dG(\s_2)}{\left(\int w^{c}(\s)^2 dG(\s)\right)^2}  \Big \},  \label{eq:cl.var}
\eea
then $E_{[n^*]}(\hat{\beta}_{\infty,n^*} - \beta)/n^*$  is an asymptotic expectation of $\hat{\beta}_{\infty,n^*} - \beta$ and $\V_{[n^*]}(\hat{\beta}_{\infty,n^*})/n^*$ is an asymptotic variance of $\hat{\beta}_{\infty,n^*}$ (both of order  ${n^*}^{-1}$) .
\end{lem}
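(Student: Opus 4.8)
\ The plan is to work in the $n=\infty$ limit, where $\hat{\beta}_{\infty,n^*}$ is a deterministic function of $\hat{\bgamma}$; to express it as an explicit ratio; to expand that ratio to second order in $\hat{\bgamma}-\bgamma$; and then to read off the asymptotic variance~(\ref{eq:cl.var}) from the leading stochastic term and the asymptotic expectation~(\ref{eq:cl.bias}) from the collection of order-${n^*}^{-1}$ terms with non-vanishing mean, both in the sense of Definition~\ref{def:asympt}.

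First I would reduce $\hat{\beta}_{\infty,n^*}$ to a ratio. In the $n=\infty$ limit it is the population OLS coefficient on $\hat{w}(\s_i)=\capr(\s_i)\hat{\bgamma}$ in the regression~(\ref{eq:healthlin}). Partialling out the health covariates $\z_i$ (Frisch--Waugh--Lovell) and using the structure of the data-generating mechanism --- that $\eta_i$, $\ve_i$, $\bzeta_i$ are mean zero and independent of $\s_i$ and of $\hat{\bgamma}$ (which depends only on the monitor data); that $\bTheta(\s)$ contains the intercept and, by Condition~\ref{cond2}, lies in the span of $\capr(\s)$, so that $\hat{w}^c(\s)=\capr^c(\s)\hat{\bgamma}$ is mean zero and $L^2(G)$-orthogonal both to $\bTheta(\s)$ and, via the normal equations defining $\bgamma$ in~(\ref{eq:gamma}), to the exposure-model residual $\Phi(\s)-\capr(\s)\bgamma$; and that Condition~\ref{cond1} gives $\bgamma^*=\bgamma$ --- yields
\be
\label{eq:pfplan.ratio}
\hat{\beta}_{\infty,n^*}\;=\;\beta\,\frac{\int\hat{w}^c(\s)\,w^{c}(\s)\,dG(\s)}{\int\hat{w}^c(\s)^2\,dG(\s)},
\ee
which equals $\beta$ at $\hat{\bgamma}=\bgamma$, in agreement with Lemma~\ref{le:ubl}.

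Next, write $\bdelta=\hat{\bgamma}-\bgamma$, which is $O_p({n^*}^{-1/2})$ and asymptotically normal with an asymptotic covariance given by \citet{White1980}, and note $\hat{w}^c(\s)-w^{c}(\s)=\capr^c(\s)\bdelta$. Setting $A=\int w^{c}(\s)^2dG(\s)>0$, $P=\int(\hat{w}^c(\s)-w^{c}(\s))\,w^{c}(\s)\,dG(\s)=O_p({n^*}^{-1/2})$, and $Q=\int(\hat{w}^c(\s)-w^{c}(\s))^2dG(\s)=O_p({n^*}^{-1})$, the numerator of~(\ref{eq:pfplan.ratio}) is $A+P$ and the denominator is $A+2P+Q$, so a second-order Taylor expansion gives
\be
\label{eq:pfplan.exp}
\hat{\beta}_{\infty,n^*}-\beta\;=\;\beta\Big(-\frac{P}{A}-\frac{Q}{A}+\frac{2P^2}{A^2}\Big)+R,\qquad R=O_p({n^*}^{-3/2}).
\ee
The leading term $-\beta P/A$ is a linear functional of the asymptotically normal vector $\bdelta$ and has mean of order ${n^*}^{-1}$; since $n^*\,E[(\hat{w}^c(\s_1)-w^{c}(\s_1))(\hat{w}^c(\s_2)-w^{c}(\s_2))]\to\C_{[n^*]}(\hat{w}^c(\s_1),\hat{w}^c(\s_2))$, rescaling $\V(-\beta P/A)$ by $n^*$ and interchanging expectation with the $dG$ integrals produces~(\ref{eq:cl.var}). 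For the asymptotic expectation I would apply Definition~\ref{def:asympt}(i), placing the mean-zero fluctuation of $-\beta P/A$ in $\tilde{\upsilon}_n$ and collecting the remaining order-${n^*}^{-1}$ pieces into $\upsilon'_n$: the term $E[P]$, governed by the order-${n^*}^{-1}$ bias $E_{[n^*]}(\hat{\bgamma}-\bgamma)$ of the first-stage estimator, rescales to $\int w^{c}(\s)\,E_{[n^*]}(\hat{w}^c(\s)-w^{c}(\s))\,dG(\s)$; $n^*E[Q]$ rescales to $\int\V_{[n^*]}(\hat{w}^c(\s))\,dG(\s)$; and $n^*E[P^2]$ rescales to $\int\int w^{c}(\s_1)w^{c}(\s_2)\,\C_{[n^*]}(\hat{w}^c(\s_1),\hat{w}^c(\s_2))\,dG(\s_1)dG(\s_2)$ (the squared biases of $\hat{w}^c$ being $O({n^*}^{-2})$ and hence negligible), and dividing by $A$ or $A^2$ as appropriate reproduces~(\ref{eq:cl.bias}).

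The hard part will be making the passage from the formal moments of~(\ref{eq:pfplan.exp}) to the asymptotic moments of Definition~\ref{def:asympt} rigorous, since --- as noted just before that definition --- $E(\hat{\beta}_{\infty,n^*})$ need not exist for finite $n^*$ because $\int\hat{w}^c(\s)^2dG(\s)$ is not bounded away from zero. This requires controlling the Taylor remainder (showing $n^*R=o_p(1)$), establishing enough uniform integrability of the relevant linear and quadratic functionals of $\bdelta$ that convergence in distribution upgrades to convergence of the expectations in~(\ref{eq:cl.bias})--(\ref{eq:cl.var}), and --- for the first term of~(\ref{eq:cl.bias}) --- pinning down the order-${n^*}^{-1}$ asymptotic bias of $\hat{\bgamma}$ in random-covariate least squares with a misspecified mean, the ``not quite unbiased'' phenomenon of~(\ref{eq:biasgammahat.body}) and the surrounding discussion. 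The stated regularity of $g(\s)$, $h(\s)$, $\Phi(\s)$, and $\capr(\s)$ is what supplies the moment bounds and the invertibility of $\int\hat{w}^c(\s)^2dG(\s)$ on a high-probability event needed for these steps.
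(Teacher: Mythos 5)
Your proposal is correct and follows essentially the same route as the paper's proof in Appendix~A: reduce $\hat{\beta}_{\infty,n^*}$ to the ratio $\beta\int\hat{w}^c w^c\,dG\big/\int(\hat{w}^c)^2dG$ using the orthogonality of $\hat{w}^c$ to $\bTheta(\s)$ and (via Condition~2) to the exposure-model residual, then perform a second-order Taylor expansion in $\hat{\bgamma}-\bgamma$ and take asymptotic moments in the sense of Definition~\ref{def:asympt}, exchanging them with the $dG$ integrals. Your scalar expansion of $(A+P)/(A+2P+Q)$ yields term-for-term the same three contributions as the paper's gradient/Hessian computation of $f(\hat{\bgamma})$, and your closing remarks about the Taylor remainder, uniform integrability, and the order-${n^*}^{-1}$ bias of $\hat{\bgamma}$ correspond to the regularity conditions and the continuous-mapping/Fubini--Tonelli steps the paper invokes.
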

The proof is outlined in Appendix~\ref{se:appendixa}, where we express $\hat{\beta}_{\infty,n^*}$ as a function of $\hat{\bgamma}$ and do a second order Taylor expansion around $\bgamma$. 
Definition~\ref{def:asympt} is required to define the order  ${n^*}^{-1}$ asymptotic expectation in the first term of (\ref{eq:cl.bias}), which is a linear function of $\hat{\bgamma}-\bgamma$. The first order terms in a Taylor expansion of $\hat{\bgamma}-\bgamma$ are of order $n^{*-1/2}$ and do not converge when multiplied by $n^*$. However, they have expectation zero, so they play the role of $\tilde{\upsilon}_n$ and do not contribute to the asymptotic expectation.  See (\ref{eq:biasgammahat.body}) and the surrounding discussion. 

The practical import of Lemma~\ref{le:ucl} is that we can use (\ref{eq:cl.bias}) to correct for the bias from classical-like error. The variance estimate in (\ref{eq:cl.var}) is not directly useful as a standard error because it does not include variability from Berkson-like error or from having $n<\infty$ study subjects, but it provides insight into the relative magnitudes of bias and variance from classical-like error.

To estimate (\ref{eq:cl.bias}), we can estimate $w^{c}(\s)$ by $\hat{w}^c(\s)=\capr^c(\s)\hat{\bgamma}$, noting that $\capr^c(\s)$ is approximated from the observed
exposure covariates for the health observations, orthogonalizing with respect to the health model covariates, which is the finite sample approximation to the construction of $\capr^c(\s)$ stated earlier in this section.
To estimate the variances and covariances of $\hat{w}^c(\s)$, we use a  robust estimator for $\mbox{Cov}_{[n^*]}(\hat{\bgamma})$ \citep{White1980, Carroll2006}.
We use the sandwich estimator to avoid the assumption of having a correctly-specified model, as required for the standard model-based estimator. Given these estimators, all the integrals in the first two terms of (\ref{eq:cl.bias}) can be estimated as averages with respect to the discrete measure with equal weight on each health observation, the standard plug-in estimator for $G(\s)$. 

Finally, in the third term of (\ref{eq:cl.bias}), we need to estimate $E_{[n^*]}(\hat{w}^c(\s) - w^{c}(\s)) = \capr(\s)E_{[n^*]}(\hat{\bgamma} - \bgamma)$, and therefore the asymptotic expectation of $\hat{\bgamma}$.  
Since we have assumed Condition~\ref{cond1}, which implies $\bgamma=\bgamma^*$, the expectation of $\hat{\bgamma}$ is approximately equal to $\bgamma$.  However, $\hat{\bgamma}$
is derived by means of a random covariate regression with a misspecified mean model, so its standard expectation is not defined.  An estimate of its asymptotic expectation is developed as follows.
Let  $\bPhi^*$ be the vector comprised of the $\Phi(\s^*_j)$ and $\capr^*$ the $n^* \times r$ matrix obtained
by stacking the $\capr(\s^*_j)$ for $j=1,\ldots,n^*$.  
For arbitrary $m_j$, denote by $\capm$ the $n^*\times n^*$ diagonal matrix with entries
$m_1,\ldots,m_{n^*}$.
If we set $m_j=1/n^*$ for $j=1,\ldots,n^*$ and define
\be
\kappa(m_1,\ldots,m_{n^*}) = \big(\capr^{*^ \top} \capm \capr^*\big)^{-1}\capr^{*^\top} \capm\bPhi^* \label{eq:kappa},
\ee
then we notice
$E\left(\hat{\bgamma}|\s^*_1,\ldots,\s^*_{n^*}\right)=\kappa(m_1,\ldots,m_{n^*})$.
We are interested in the unconditional expectation of $\hat{\bgamma}$.
Heuristically, we assume that the true $h(\s)$ is supported on the observed monitor locations and gives equal weight to each observation (i.e., we use the plug-in estimator for $h(\s)$).
In that case, a realization of $\s^*_1,\ldots,\s^*_{n^*}$ can be expressed as a multinomial draw, $m_1,\ldots,m_{n^*}$, where the $m_j$ are the fraction of times each location in the support of $h(\s)$
is drawn.  We can estimate
the expectation of $\kappa(m_1,\ldots,m_{n^*})$ by means of a Taylor series expansion of $\kappa(\cdot)$ around $m_j=\frac{1}{n^*}$ for $j=1,\ldots,n^*$. Using the first and second moments of a multinomial distribution, we have
\be
\label{eq:biasgammahat.body}
E_{[n^*]} \left(\hat{\bgamma} - \bgamma\right) \approx \frac{1}{2}\left(\frac{1}{n^*}-\frac{1}{(n^*)^2}\right)\sum_{j=1}^{n^*} \frac{\partial^2 \kappa}{\partial m_j^2} -
\frac{1}{2} \frac{1}{(n^*)^2} \sum_{j,k=1;j\neq k}^{n^*} \frac{\partial^2 \kappa}{\partial m_j \partial m_k}.
\ee
It easy to see that the first order terms in the Taylor expansion of $\kappa(\cdot)$ (not shown) have expectation zero, so they play the role of $\tilde{\upsilon}_n$ in Definition~\ref{def:asympt}
and do not contribute to the asymptotic expectation.
We give further details on numerical calculation of the above expression in Appendix~\ref{se:appendixb}.  
A more formal derivation that does not begin by assuming a discrete distribution could be developed by a von Mises expansion with the empirical process of monitor locations \citep[Section 20.1]{Vaart1998}.
Note that although we do not observe the $\Phi(\s^*_j)$, replacing them with $x^*_j$ in (\ref{eq:kappa}) does not introduce bias since $x^*_j=\Phi(\s^*_j)+\eta^*_j$, and the $\eta^*_j$ are 
independent of everything else and have mean zero.

Finally, we can gain additional insight into the bias and variance contributions from classical-like error by considering the simplified situation in which the exposure model is correctly specified so that $w(\s)=\capr(\s) \bgamma$ for all $\s$,
the subject and monitor location densities, $g(\s)$ and $h(\s)$, are the same, and there are no additional covariates or intercept in the health model.  In that case
it is easy to show that the asymptotic expectation simplifies to 
\bea
\label{eq:simp.bias}
\frac{1}{n^*}E_{[n^*]}(\hat{\beta}_{\infty,n^*} - \beta)
&=&- \beta \frac{1}{n^*} \frac{(r-2)\sigma^2_{\eta^*}}{\int w(\s)^2 dG(\s)},
\eea
and the asymptotic variance simplifies to 
\bea
\label{eq:simp.var}
\frac{1}{n^*}\V_{[n^*]}(\hat{\beta}_{\infty,n^*})&=&\beta^2 \frac{1}{n^*} \frac{\sigma^2_{\eta^*}}{\int w(\s)^2 dG(\s)}.
\eea
The $r-2$ term in (\ref{eq:simp.bias}) illustrates the fact that the bias is away from the null in the case of a one-dimensional exposure model
and that more typically it is toward the null and becomes larger with higher-dimensional exposure models,
for a given true exposure surface. This is what occurs empirically in our simulations and examples.

In addition, the ratio of the squared bias to the variance is 
\be
\frac{(r-2)^2}{n^*}\frac{\sigma^2_{\eta^*}}{\int w(\s)^2 dG(\s)},
\ee
which demonstrates that the importance of the bias depends on the dimensionality of the exposure model relative to the sample size and the ratio of the noise to the signal in the exposure data.

\section{Measurement error correction
\label{se:correction}}

We correct for measurement error by means of an optional asymptotic bias correction based on (\ref{eq:cl.bias}) followed by a design-based nonparametric bootstrap standard error calculation (incorporating the asymptotic bias correction in the bootstrap, if appropriate).  

Given a bias estimate $\hat{b}$ from  (\ref{eq:cl.bias}) the bias-corrected $\hat{\beta}_{bc}$ is $\hat{\beta}/(1+\hat{b})$.  Bias correction is optional since the asymptotic results of Section~\ref{se:decomp} 
show that the naive health effect estimator is consistent, with variance dominating the bias in the limit as the number of exposure observations increases. 
We explore the magnitude of bias and utility of including the asymptotic correction via simulation in the next section and comment further on this topic in 
Section~\ref{se:disc}.

We need to estimate the uncertainty in either $\hat{\beta}$ or $\hat{\beta}_{bc}$ in a way that accounts for all the components of the measurement error and the sampling variability in the health model. Note that the asymptotic variance (\ref{eq:cl.var}) accounts only for the variance from the classical-like measurement error.
Since we have assumed that the locations of health and exposure data are randomly drawn according to the densities $g(\s)$ and $h(\s)$, respectively, a simple design-based nonparametric bootstrap is a suitable approximation to the data-generating mechanism. 
To obtain each bootstrap dataset, we separately resample with replacement $n^*$ exposure measurements and $n$ health observations.  We fit the
exposure model to the bootstrapped exposure measurements and use the results to predict exposures at the locations of the bootstrapped health observations.  We then obtain
bootstrap health effect estimates (with or without bias correction) and estimate the standard error of $\hat{\beta}$ or $\hat{\beta}_{bc}$
by means of the empirical standard deviation of these values.

In principle, we could avoid the asymptotic calculations in (\ref{eq:cl.bias}) by employing a bootstrap procedure to estimate bias followed by a second round of bootstrapping for standard error estimation.
Such a nested bootstrap is computationally demanding. Furthermore, our strategy of using the bootstrap after addressing the bias is consistent with the comments of \cite[Chapter 10]{Efron1993} who caution that bias correction with the bootstrap is more difficult than variance estimation. Along similar lines, \cite[p. 216]{Buon2010} notes the need for additional assumptions when developing a two-stage bootstrap that includes bias correction.  

\section{Simulations\label{se:sims}}

\subsection{One dimensional exposure surface}
\label{se:1d.ex}
Our first set of simulations is in the simplified setting of a one dimensional exposure surface.  In this setting, we illustrate 
the bias from Berkson-like error for very large $n^*$ when either Condition~\ref{cond1} or~\ref{cond2} is violated, and we illustrate the finite $n^*$ measurement
error correction methods from Section~\ref{se:correction} when both compatibility conditions are satisfied. 
We simulate 1,000 Monte Carlo datasets
and use 100 bootstrap samples, where applicable.  

The true health model is linear regression with $\beta=1$, with i.i.d. $\epsilon \sim N(0,1)$ and an intercept but no additional health model covariates.  We use $n=500$ subjects.
The true exposure surface on $(0,10)$  is a combination of low frequency and high frequency sinusoidal components 
\[
\Phi(\s) = \sin(\s+3.5) + \frac{\s+4}{20}\sin(4 \s - 10.5),
\]
and we set  $\sigma^2_\eta=\sigma^2_{\eta^*}=0.5$.
The density of monitor locations is 
\be
\label{eq:h.dens}
h(\s)=\left\{
\begin{array}{ll}
0.142&0<\s\leq \frac{10}{3}, \frac{20}{3}<\s<10\\
0.0142& \frac{10}{3}\leq \s \leq \frac{20}{3},
\end{array}
\right.
\ee
and we use an exposure model $\capr(\s)$ comprised of a B-spline basis with $5$ to $25$ degrees of freedom \citep{Hastie2001}. 

To illustrate the bias from the Berkson-like error when either of the compatibility conditions is violated, we set $n^*=1000$ so that the classical-like error is negligible.  The results of these simulations are shown in Figure~\ref{fi:ubl}.  In panels~(a) and~(b), the health model is fit with an intercept but no additional health model covariates, so Condition~\ref{cond2}
is automatically satisfied.  In panel~(a) the density of subject locations $g(\s)$ is the same as $h(\s)$, and there is no evidence of bias in $\hat{\beta}$.
In panel~(b), $g(\s)$ is uniform on the interval $(0,10)$ so that Condition~\ref{cond1} is violated.  There is clear 
evidence of bias away from the null for $5$ and $9$ df exposure models.  There is no evidence of bias with $13$ df,
which can be attributed to the fact that the exposure model with $13$ df is sufficiently rich to account for almost all
of the spatial structure in $\Phi(\s)$, meaning that the Berkson-like error behaves like pure Berkson error.

In panels~(c) and~(d) of Figure~\ref{fi:ubl}, $g(\s)$ is the same as $h(\s)$, but we fit the health model including an additional covariate $z_i=\sin(\s_i)$.  In panel~(c), this
covariate is also included in the exposure model, and as expected we see no evidence of bias in $\hat{\beta}$.
In panel~(d), the additional covariate is not included in the exposure model, so Condition~\ref{cond2} is violated.
There is noticeable bias of $\hat{\beta}$ toward the null, especially for the $5$ and $9$ df spline models. 

In  Figure~\ref{fi:corr}, we show results from a separate set of simulations with $n^*=200$ in order to illustrate the measurement
error correction methods from Section~\ref{se:correction}.  In these simulations, $g(\s)$ is the same as $h(\s)$ and the health model is fit without additional covariates, so Conditions~\ref{cond1} and~\ref{cond2} are satisfied.
The mean out-of-sample
$R^2$ ranges from $0.25$ for 5 df to $0.35$ for 13 df, corresponding to the challenging situation of an exposure model with marginal performance that can lead to substantial bias in estimating $\beta$.
In panel~(a), we see that the uncorrected health effect estimates have notable bias, especially for larger df exposure
models, and our correction successfully removes most of the bias.
Panel~(b) shows the coverage of nominal 95\% confidence intervals.
In the uncorrected analyses, coverage ranges from 45\% to 80\%, depending on the df in the exposure model.
Confidence intervals that incorporate either the bias correction or bootstrap standard errors
improve the coverage.  We obtain nearly perfect 95\% coverage when we incorporate the bias correction and bootstrap standard errors.

\subsection{Spatial exposure surface}
\label{se:spatial.ex}

Our second set of simulations is based on the MESA Air study design in the Baltimore region (Section~\ref{se:mesa}), using 1,000 simulated datasets
and 100 bootstrap samples.  
We enforce Conditions~\ref{cond1} and~\ref{cond2} and focus on illustrating the 
value of the correction methods described in Section~\ref{se:correction} in a realistic spatial setting.
The spatial domain is a $257 \times 257$ discrete grid scaled to be a square 30 units on a side.  There are $n^*=125$ monitor locations,
sampled in clusters by first choosing 25 locations i.i.d. uniformly on $S$ and then also including the four nearest 
neighbors for each such location.
Our bootstrap for these simulations resamples clusters of five monitors.
A total of $n=600$ subject locations are selected uniformly and independently from $S$.  

The predictable part of the exposure surface is
\[
\Phi(\s)=\gamma_0+\gamma_1 R_1(\s) +    \gamma_2 R_2(\s) +    \gamma_3 R_3(\s) + \Phi_1(\s).
\]
Each $\gamma_i=4.9$, and each $R_k(\s)$ is constructed by drawing  i.i.d. realizations from $N(0,1/3)$ at each $\s\in S$.
$\Phi_1(\s)$ is a fixed realization from a spectral approximation to a Gaussian field with Mat\'{e}rn covariance \citep{Paciorek2007} with range 20 and unit differentiability parameter, 
normalized such that the variance of $\Phi_1(\s)$ on $S$ is 30.  
Thus the total variance of $\Phi(\s)$ on $S$ is
approximately 54.  In the true exposure surface and monitoring data, there is also a nugget with variance  $\sigma^2_\eta=\sigma^2_{\eta^*}=6$.
We consider two spatial scenarios, corresponding to different fixed realizations of $\Phi_1(\s)$.  These surfaces are shown in Figure~\ref{fi:spatial}.

The spatial exposure model has $\capr(\s)$ comprised of $R_k(\s)$ for $k=1,2,3$ and a thin-plate spline basis 
derived by fitting a GAM from the MGCV package in R  \citep{Wood2006} to the observed monitoring data with fixed degrees of freedom (df).
Thus the spatial basis is actually different for each simulated dataset since it depends on the monitor locations, but we keep the same basis functions for the bootstrap analysis within
each simulation run.  We estimate the standard error of $\hat{\bgamma}$ using a sandwich estimator for clustered data implemented in the R package geepack \citep{Geepack2006}.

The true and fitted health models have an intercept but no additional covariates.  We set $\beta=0.1$ and consider  i.i.d. normally distributed $\epsilon \sim N(0,\sigma^2_\epsilon)$ with
$\sigma^2_\epsilon$ equal to 200 or 10.  The larger value of $\sigma^2_\epsilon=200$ is consistent with what we see in the MESA Air data
with left ventricular mass index (LVMI) as the outcome,
where the air pollution exposure explains approximately $0.3\%$ of the variance after
adjustment for known risk factors.  We also consider $\sigma^2_\epsilon=10$ such that  air pollution exposure explains approximately $5\%$ of the health outcome
variance in order to see more clearly the potential impact of exposure measurement error.


The two spatial surfaces, while generated at random, represent different deterministic scenarios 
in which we could find ourselves (e.g., different metropolitan areas).  In scenario~1 the spatially
structured part of the air pollution surface $\Phi_1(\s)$ can be represented fairly well using thin-plate splines with either 5 or 10 df, while the 
spatial surface in scenario 2 cannot be represented well with 5 df but can be reasonably well modeled with 10 df.  This is reflected in
the $R^2$ values in Figure~\ref{fi:spatial}, which represent the best thin-plate spline fits to the surfaces, assuming essentially unlimited monitoring data is available.
The cross-validated and out-of-sample $R^2$ values for predicting the full air pollution surface $\Phi(\s)$ (including non-spatially structured covariates)  based on monitoring data 
reported in Table~\ref{ta:linear.sim} exhibit a similar pattern.  For leave-one-out cross-validation, the clusters of five adjacent monitors are treated as a single unit.

We focus our discussion on the scenarios with $\sigma^2_\epsilon=10$ because this is where the impact of exposure measurement error is most prominent.  The measurement error impact is 
qualitatively similar for $\sigma^2_\epsilon=200$, but it is less important because the unmodeled variability in the health outcome dominates.  Our theory dictates that the relative biases for $\sigma^2_\epsilon=10$ and $\sigma^2_\epsilon=200$ are identical, which we verified in simulations out to four significant digits, so we only report one value.

When we fit the exposure model with a 5 df thin-plate spline, there is modest bias toward the null of $3\%$ in scenario~1 and more substantial bias of $12\%$ in scenario~2.
Our asymptotic  correction reduces the magnitude of bias in both instances.  The bias correction followed by bootstrap standard errors consistently gives valid inference,
including accurate standard error estimates and nominal coverage of $95\%$ confidence intervals.  In scenario~1, we also get valid inference with bootstrap standard errors
and no bias correction.  

When we increase the complexity of the spatial model to 10 df, prediction accuracy improves in both scenarios, but inference about the health effect
parameter is degraded.  The magnitude of bias is approximately the same as with 5 df, but our asymptotic correction is less effective.   Furthermore, the bootstrap standard error estimates
tend to be too large, resulting in over-coverage of 95\% confidence intervals. 
These findings are not surprising, because while each simulated dataset has $125$ monitor locations, they 
are clustered in groups of $5$ so that there are effectively only $25$ unique locations for estimating the smooth component of the spatial surface, so a thin-plate spline model with 10~df 
overfits these data in the sense that we do not expect to be able to rely on large $n^*$ asymptotic approximations such as (\ref{eq:cl.bias}) or the nonparametric bootstrap. 

\section{Data analysis}
\label{se:mesa}

The Multi-Ethnic Study of Atherosclerosis and Air Pollution (MESA Air) is an ongoing cohort study designed to investigate the relationship between
air pollution exposure and progression of subclinical atherosclerosis \citep{Bild2002,Kaufman2012}.  
The MESA Air cohort includes over 6,000 subjects in six U.S. metropolitan areas (Baltimore City and Baltimore
County, MD; Chicago,
IL; Forsyth County
(Winston-Salem), NC;
Los Angeles and Riverside Counties,
CA; New York and Rockland County, NY;
and St Paul, MN).  Four ethnic/racial groups were targeted, white, African American, Hispanic, and Chinese American,
 and all study participants (46 to 87 years of age) were without clinical cardiovascular disease at the baseline examination
(2000–-2002).
An early cross-sectional finding from MESA Air is that an elevated left-ventricular mass index (LVMI) is associated with exposure to
traffic related air pollution, specifically outdoor residential concentrations of gaseous oxides of nitrogen (NOx) \citep{VanHee2009,VanHee2012}.
\citet{VanHee2012} found 
that an increase in NOx concentration of 10 parts per billion (ppb) is associated with a 0.36 $g/m^2$ increase in LVMI (95\% CI: 0.02 - 0.7 $g/m^2$).

\citet{VanHee2012} utilized predictions from a spatio-temporal exposure model that incorporates regulatory and study-specific monitoring data in all six regions \citep{Szpiro2009}.
To illustrate our methodology for a purely spatial exposure model, we re-analyze the data restricted to subjects in the Baltimore region, and we construct an exposure model based 
on data from three community snapshot monitoring campaigns conducted by MESA Air.  In brief, the community snapshot campaign consisted of
three separate rounds of
spatially rich sampling during single two-week periods in different seasons.  In the Baltimore area, approximately 100
measurements were made in each of three two-week periods in May 2006, November 2006, and February 2007.
In each round of snapshot monitoring, the majority of monitors were arranged in clusters of six, with
three on either side of a major road at distances of approximately 50, 100, and 300 meters \citep{Cohen2009}.  In addition, the locations were chosen
to characterize different land use categories and to cover the geographic region as broadly as possible.  
To help with satisfying Condition~\ref{cond1}, we  exclude one cluster from our analysis
because it is far from any of the study subjects, and we approximate long-term average concentrations by averaging the three available measurements at locations that were monitored in all three seasons.
The 93 monitor locations and 625 subject locations in our analysis are shown in Figure~\ref{fi:mesa}.   

Our exposure model incorporates five geographic covariates: (i) distance to a major road, (ii) local-source traffic pollution from a dispersion model \citep{Wilton2010}, (iii) population density in a 1 km buffer, (iv) distance to downtown, and (v) transportation land use in a 1 km buffer.  The first three of these geographic covariates are log-transformed.
An additional covariate describing the density of high-intensity land-use (commercial, industrial, residential, etc.) was also incorporated in the original spatio-temporal model predictions used by \citep{VanHee2012}, but
we exclude this covariate from our model because it has very different distributions across subject and monitor locations, a clear violation of Condition~\ref{cond1}.
To account for unmodeled spatial structure, we use a thin-plate spline basis with 0, 5, or 10 df, constructed as in the simulations.  We estimate the standard error of $\hat{\bgamma}$ using a sandwich estimator for clustered data implemented in the R package geepack \citep{Geepack2006}.
We estimate the association between NOx and LVMI by fitting a multivariate linear regression, including an exhaustive set of additional 
health model covariates that could be potential confounders \citep{VanHee2009}.

The results of our analysis are shown in Table~\ref{ta:mesa}, with 10,000 bootstrap replicates (resampling clusters of monitors, where applicable). 
Our findings are very similar for an exposure model that is purely land-use regression and one that includes splines with 5 df.  We estimate that an 
increase in NOx concentration of 10 parts per billion (ppb) is associated with approximately a 0.7 $g/m^2$ increase in LVMI.  Our standard error estimates for these models in Table~\ref{ta:mesa} range from $0.55$ to $0.68$ $g/m^2$, so the difference
in effect size from that found by \citet{VanHee2012} is very likely due to our more limited dataset.   
The exposure model that includes 5 df splines
has a larger cross-validated $R^2$, suggesting that it captures more variability in the exposure.  This translates into a smaller model-based standard error, but
this apparent advantage is attenuated when we correct for exposure measurement error with bootstrap standard error estimates, and it goes away entirely
when we also incorporate the bias correction.  

The exposure model with 10 df gives slightly larger effect estimates and standard errors.  There is also more evidence of bias from classical-like error than for the lower
dimensional exposure models.  However, our simulation results in Table~\ref{ta:linear.sim} suggest that a 10 df spline is too rich of a model for the available monitoring data and that these results should be considered less reliable than those based on 5 df splines.

\
\section{Discussion}
\label{se:disc}

We have developed a statistical framework for characterizing and correcting measurement error in two-stage analyses, focusing particularly
on problems where a first-stage spatial model is used to predict exposure that is measured at different locations than are needed in a second-stage health analysis. Our methodology is robust to misspecification of the exposure model, treating it as a device to explain some portion of the variability in exposure. We adopt a design-based perspective in which the process of selecting exposure measurement and subject locations is the primary source of spatial randomness, leading naturally to nonparametric bootstrap resampling for standard errors.  A major contribution of our work is that we delineate the 
potential sources of bias from Berkson-like and classical-like measurement error and provide strategies for reducing bias and variance at the design and analysis stages.  Bias from classical-like error can be corrected using an asymptotic approximation, whereas bias from Berkson-like error 
should be addressed at the design stage or when selecting an exposure model.

While our research is primarily motivated by epidemiologic analysis of long-term air pollution health effects, we note that the spatial prediction problem 
can be interpreted as a linear model. Thus, our measurement error decomposition, asymptotic results, and bias correction hold equally well in non-spatial settings. 

Our theory and simulations demonstrate that bias from the classical-like error is small when the exposure model is not overfit in
the sense that there are sufficient observations relative to the dimension of the exposure model for the large $n^*$ asymptotics to be relevant. The limited magnitude of the bias suggests that measurement error correction efforts should focus on avoiding overfitting the exposure model and satisfying the conditions needed to ensure that Berkson-like error does not induce important bias (at least in a linear health model). 
Nonetheless, in several simulation scenarios our asymptotic correction for bias from classical-like error 
results in improved estimation and inference, even at the expense of increased variance. Indeed, in our analyses and simulations the increased variance caused by estimating the bias is modest.  

Our theoretical development motivates the use of a nonparametric bootstrap to account for variability induced by measurement error. When the bias correction is not used, simulations suggest that the underestimation of uncertainty from ignoring the measurement error (using a sandwich variance estimator) is modest, but even so there are cases in which accounting for the effect of measurement error is necessary. 
When we include the asymptotic bias correction, the bootstrap is more generally necessary for valid confidence intervals. 

As we remarked in Section~\ref{se:compatibility}, exposure model selection is a broad topic and a specific algorithm for selecting geographic 
covariates or spline basis functions is beyond the scope of this paper.
However, we discuss below several practical approaches that can be considered in designing a study to approximately satisfy the compatibility conditions from Section~\ref{se:compatibility}, so as to minimize the bias from Berkson-like error.
We will explore these options and related tradeoffs further in future work.

First, to satisfy Condition~\ref{cond1}, as much care as
possible should be taken at the design stage to ensure the sampling
densities of locations and exposure covariates are as similar as
possible in the first-stage exposure observations and the second-stage outcome
observations. While this criterion is overly abstract in the context of a
specific study, the practical implication is that first-stage and
second-stage locations should be chosen to be similar in terms of
location and pertinent covariates. If exposure data have already been
collected, it may be necessary to consider excluding exposure
or outcome data or deleting one or more covariates from $\capr(\s)$ in order to minimize the mismatch. 

If we are particularly concerned about Condition~\ref{cond2}, we can add terms to $\capr(\s)$ to span $\bTheta(\s)$.  We
generally will not know $\bTheta(\s)$ directly, but if we do (e.g., if
household income were known and monitors were located at homes) then
supplementing $\capr(\s)$ with $\bTheta(\s)$ or projecting $\capr(\s)$
to make it orthogonal to $\bTheta(\s)$ are equivalent. 
In most
realistic settings, we will assume that $\bTheta(\s)$ is a set of smooth
functions of space that can be modeled by spline terms, but we will not
know the minimal spanning spline basis.  In this case it is preferable
to supplement $\capr(\s)$ with as rich of a basis as possible without
introducing substantial classical-like error.
Projecting $\capr(\s)$ to make it orthogonal to a similarly rich
spline basis would likely result in a significant diminution of
exposure variability beyond what is needed to eliminate bias from not satisfying Condition \ref{cond2}. 

The possibility of adding dimensions to $\capr(\s)$ highlights the critical tradeoff between Berkson-like and classical-like error. 
Augmenting $\capr(\s)$ reduces Berkson-like error by accounting for more of the
variability in $w(\s)$. Since eliminating Berkson-like error also
eliminates the need to satisfy the
compatibility conditions, we generally
expect that adding such terms will limit bias from the Berkson-like
error. A side effect of augmenting $\capr(\s)$ is to change the
sampling variability of $\hat{\bgamma}$, which impacts the
classical-like error.  This could be beneficial if the additional
terms in $\capr(\s)$ account for a substantial amount of variability
in $w(\s)$, since the result will be to reduce the variance of the
original components of $\hat{\bgamma}$.  On the other hand, if the
coefficients for the new terms are difficult to estimate, the result will be a substantial
new contribution to the classical-like error, leading to additional
bias and variance in the second-stage estimation.  In fact, in order to reduce classical-like error, one might choose to
remove selected dimensions from $\capr(\s)$ if their coefficients are
particularly difficult to estimate.  

There are some key assumptions in our model that may not be strictly
satisfied in air pollution epidemiology studies. First, we regard the
sets of locations of exposure and health observations to be
independent, or at least independent clusters. This assumption can be
questioned, particularly in the case of air pollution monitors, as one
would not expect a government agency to select two sites that are very
close together. Second, a major source of exposure heterogeneity that
we do not consider is the difference between exposure at a residence
and the exposure experienced by individuals when they are not
home. Mobility may be less important in studies of small children and
the elderly, but this remains an open issue in the epidemiologic
literature. 

Finally, as described in Section~\ref{se:dgm}, we condition
on the unobserved but deterministic spatial variation in exposure during the time period of
the study. This avoids having to postulate
that one could meaningfully repeat the experiment in other time
periods.  This is particularly important when the averaging period of interest is one or more years since
secular trends in the nature and sources of air pollution limit the number of years during which air pollution 
studies can be regarded as answering analogous scientific questions. 
For shorter-term studies, there
is additional variability associated with the choice of time period, and it would be reasonable to regard
the different air pollution surfaces at different times as arising from a random spatial process. 
However, with data from only a single time period and a misspecified mean model, it is impossible to identify both the fixed and random components of the spatial residuals, 
so we do not incorporate
a random effect in our formulation.

Our measurement error correction is based on asymptotic approximations
derived for linear regression for the exposure and health
models. Real world applications often involve additional
complications, suggesting further research directions. On the
exposure model side, our methods can be extended to penalized models and full-rank models such as
universal kriging and related spatio-temporal models that are often used in environmental studies. 
Nonlinear models such as logistic regression and Cox regression are
commonly used for the second stage in health studies, and it is also important to consider the implications of
misspecification in the second-stage model, in addition to the exposure model. 

Two-stage analyses to date have taken the approach of optimizing the
exposure model for exposure prediction accuracy, based on the implicit
assumption that this will also lead to optimal second-stage health
effect inference.  In previous work we have shown that optimizing the
exposure model for prediction accuracy can be sub-optimal for health
effects estimation \citep{Szpiro2011epi}. An interesting avenue for
future research involves developing methods to optimize the exposure
model for estimation of the health effect of interest in the
second-stage model.  A final direction for additional research that is
of great interest in air pollution epidemiology is to extend these
methods for measurement error correction when assessing health effects
of multiple exposures or mixtures of exposures. When predictions for more than one
exposure are used in a health model, there is the possibility of a
form of omitted variable bias from components of variability that are
missing from the predictions of the exposures.

\section*{Acknowledgments}

AAS was supported by the United States Environmental Protection Agency through R831697 and RD-83479601 
and by the National Institute of Environmental Health Sciences through R01-ES009411 and 5P50ES015915.
CJP was supported
by the National Institute of Environmental Health Sciences through ES017017-01A1.
We thank Brent Coull and Lianne Sheppard for comments on the manuscript and Joel Kaufman, Victor Van Hee, and the MESA Air data team for assistance with MESA Air data.

Although the research described in this presentation has been funded wholly or in part by the United States Environmental
Protection Agency through R831697 and RD-83479601 to the University of Washington, it has not been subjected to the Agency’s required peer
and policy review and therefore does not necessarily reflect the views of the Agency and no official endorsement should be
inferred.
\clearpage
\newpage
\appendix
\section{Proof of of Lemma~\ref{le:ucl}}
\label{se:appendixa}

\begin{proof}[Proof of Lemma~\ref{le:ucl}]
By definition, $\hat{\beta}_{\infty,n^*}$ is the 'true' parameter value in a linear model for $y$ on $\hat{w}(\s)$ and covariates, $\z$, so we can express
\bes
y_i = \hat{\beta}_{\infty,n^*} \hat{w}(\s_i)+\hat{\bbeta}_{z,\infty,n^*}^\top \z_i + \nu_i,
\ees
for some $\hat{\bbeta}_{z,\infty,n^*}$ and an error term, $\nu_i$, that is orthogonal to $\hat{w}(\s_i)$ and to the elements of $\z_i$ with respect to the density $g(\s)$.
We can rewrite this expression as
\be
\label{eq:y.expand}
y_i =  \hat{\beta}_{\infty,n^*} \hat{w}^c(\s_i) +\left( \hat{\bbeta}_{z,\infty,n^*}^\top \z_i + \nu_i + \hat{\beta}_{\infty,n^*}(\hat{w}(\s_i)-\hat{w}^c(\s_i))\right)
\ee
and note that each of the last three terms is orthogonal to $ \hat{w}^c(\s_i)$. Given this orthogonality, we can view the last three terms as a single error term. 
The OLS solution for this no-intercept linear model,  
\be
\label{eq:beta.inf2}
 \frac{\sum_{i=1}^n y_i \hat{w}^c(\s_i)}{\sum_{i=1}^n \hat{w}^c(\s_i)^2},
\ee
converges a.s. to $\hat{\beta}_{\infty,n^*}$ for any fixed $n^*$ based on Lemma 1 of \citet{White1980}. We can express each observation as
\bea
\label{eq:yi}
y_i = \beta w^c(\s_i)+\beta(w(\s_i)-w^c(\s_i))+\beta(\bPhi(\s_i)-w(\s_i))+\beta \eta_i + \bbeta_z^\top \z_i + \epsilon_i
\eea
and plug in for $y_i$ in (\ref{eq:beta.inf2}). Taking the limit as $n\to\infty$ shows that the a.s. limit of (\ref{eq:beta.inf2}) can be expressed as
\bea
\label{eq:beta.inf1}
\hat{\beta}_{\infty,n^*} = \beta\frac{ \int w^c(\s) \hat{w}^c(\s) dG(\s)}{\int \hat{w}^c(\s)^2 dG(\s)},
\eea
where the limiting value on the right hand side is found by dividing the numerator and denominator of (\ref{eq:beta.inf2}) by $n$ and invoking the law of large numbers.
In the numerator, only the first summand in the expression for $y_i$ gives a non-zero contribution. 
The contribution from the second summand is zero because $(w(\s_i)-w^c(\s_i))$ is a linear combination of elements of 
 $\bTheta(\s)$ while $w^c(\s)$ is orthogonal to each component of $\bTheta(\s)$.  For the third summand, we use our assumption that Condition~\ref{cond2} is satisfied.  Namely, $w^c(\s)$ is a linear combination of elements of $\bTheta(\s)$ and $\capr(\s)$, so it
is sufficient for $(\bPhi(\s_i)-w(\s_i))$
to be orthogonal to each of the elements of $\bTheta(\s)$ and $\capr(\s)$, and this is guaranteed if the elements of $\bTheta(\s)$ are in the span of $\capr(\s)$ as required by Condition~\ref{cond2}.

We now define the $p\times p$ matrix $\capa=\int \capr^c(\s) ^\top \capr^c(\s) dG(\s)$ and set
\bes
f(\hat{\bgamma})= \left( \bgamma^\top \capa \hat{\bgamma}\right) \left(\hat{\bgamma}^\top \capa \hat{\bgamma}\right)^{-1},
\ees
so that $\hat{\beta}_{\infty,n^*} = \beta f(\bgamma)$.
The gradient of $f(\hat{\bgamma})$ is
\bes
Df(\hat{\bgamma}) = -2(\bgamma^\top \capa \hat{\bgamma}) (\hat{\bgamma}^\top \capa \hat{\bgamma})^{-2} \capa \hat{\bgamma} +  (\hat{\bgamma}^\top \capa \hat{\bgamma})^{-1} \capa \bgamma,
\ees
and its Hessian is
\beas
D^2f(\hat{\bgamma}) &=& {}-2(\bgamma^\top \capa \hat{\bgamma}) (\hat{\bgamma}^\top \capa \hat{\bgamma})^{-2} \capa + 8 (\bgamma^\top \capa \hat{\bgamma}) (\hat{\bgamma}^\top \capa \hat{\bgamma})^{-3} \capa \hat{\bgamma} \hat{\bgamma}^\top \capa \\
&& {} - 2 (\hat{\bgamma}^\top \capa \hat{\bgamma})^{-2} \capa \hat{\bgamma}{\bgamma}^\top \capa - 2 (\hat{\bgamma}^\top \capa \hat{\bgamma})^{-2} \capa {\bgamma} \hat{\bgamma}^\top \capa.
\eeas
A second order Taylor expansion of $f(\hat{\bgamma})$ can be written 
\beas
f(\hat{\bgamma})&\approx&1-(\bgamma^\top \capa \bgamma)^{-1} (\hat{\bgamma}-\bgamma)^\top \capa \bgamma -  (\bgamma^\top \capa \bgamma)^{-1}(\hat{\bgamma}-\bgamma)^\top\capa (\hat{\bgamma}-\bgamma)\\
&&{} + 2 (\bgamma^\top \capa \bgamma)^{-2}(\hat{\bgamma}-\bgamma)^\top  \capa \bgamma \bgamma^\top \capa (\hat{\bgamma} - \bgamma)\\
&=&1- \frac {\int \left(\hat{w}^c(\s)-w^c(\s)\right) w^c(\s) dG(\s) }{\int w^c(\s)^2 dG(\s)}-\frac{ \int \left(\hat{w}^c(\s)-w^c(\s)\right)^2dG(\s) }{ \int w^c(\s)^2 dG(\s)}\\
&&{}+2 \frac{ \left( \int \left(\hat{w}^c(\s)-w^c(\s)\right) w^c(\s) dG(\s)  \right)^2}{\left( \int w^c(\s)^2 dG(\s)\right)^{2}}\\
&=&1- \frac {\int \left(\hat{w}^c(\s)-w^c(\s)\right) w^c(\s) dG(\s) }{ \int w^c(\s)^2 dG(\s)}-\frac{ \int \left(\hat{w}^c(\s)-w^c(\s)\right)^2dG(\s) }{\int w^c(\s)^2 dG(\s)}\\
&&{}+2 \frac{ \int \left(\hat{w}^c(\s_1)-w^c(\s_1)\right)  \left(\hat{w}^c(\s_2)-w^c(\s_2)\right) w^c(\s_1)w^c(\s_2) dG(\s_1) dG(\s_2) }{\left( \int w^c(\s) dG(\s)\right)^2}
\eeas
We require sufficient regularity such that the higher order terms converge in distribution to zero sufficiently fast as $n^*\rightarrow \infty$. Equations (\ref{eq:cl.bias}) and (\ref{eq:cl.var}) are then derived by taking the asymptotic expectation and asymptotic variance of the expression above (for the asymptotic variance, we can restrict our attention to the first non-constant term above). 
This requires interchanging asymptotic expectations with respect to $\hat{\bgamma}$ with integrals in $\s$. 
A sufficient condition to do this is that $\capr(\s)$ is bounded as a function of $\s$. The interchange is accomplished by expressing asymptotic expectations as standard expectations of the appropriate limiting distributions of functions of $\sqrt{n^*}(\hat{\bgamma} -\bgamma)$ using Definition~\ref{def:asympt}, invoking the continuous mapping theorem to express these limiting distributions as functions of the limiting distribution of $\sqrt{n^*}(\hat{\bgamma} -\bgamma)$, and then using the Fubini-Tonelli theorem \citep{folland1999real} to exchange the order of integration between $\hat{\bgamma}$ and of $\s$. 
There is an additional step to express asymptotic expectations as the asymptotic covariances in (\ref{eq:cl.bias}) and (\ref{eq:cl.var}).
We use Definition~\ref{def:asympt} to express the asymptotic expectations as standard expectations of the appropriate limiting distributions of functions of $\sqrt{n^*}(\hat{\bgamma} -\bgamma)$, again invoke the continuous mapping theorem, express the resulting expectations as covariances of the limiting distributions of functions of $\sqrt{n^*}(\hat{\bgamma} -\bgamma)$,
and finally write the resulting expression as asymptotic covariances using Definition~\ref{def:asympt}.
\end{proof}

\section{Estimating the asymptotic bias} 
\label{se:appendixb}

To estimate the asymptotic bias based on (\ref{eq:kappa}), we require the second derivatives, $\frac{\partial^2 \kappa}{\partial m_j \partial m_k}$.
Differentiating with respect to $m_j$ and $m_k$ gives us a long expression involving first and second derivatives of $M$:
\beas
\rtmri \rtdmkr \rtmri \rtdmjr \rtmri \rtmp - \\\rtmri \rtddmr \rtmri \rtmp + \\
\rtmri \rtdmjr \rtmri \rtdmkr \rtmri \rtmp - \\\rtmri \rtdmjr \rtmri \rtdmkp - \\
\rtmri \rtdmkr \rtmri \rtdmjp + \\\rtmri \rtddmp.
\eeas
The first derivative with respect to $m_j$ is a matrix of zeros with a single one in the $j$th diagonal position, while the second partial derivative is a matrix of zeros. The terms that remain in the expression can be easily calculated based on the observed $\capr^*$ and using the plug-in estimate, $\capx^* = (x_1^*,\ldots,x_{n^*}^*)^\top$, for $\bPhi^*$. Note that $\capr^{*\top} \frac{\partial \capm}{\partial m_j}  \capr^* = \capr_j^* \capr_j^{*\top}$, where $\capr_j^*$ is the $j$th row of $\capr^*$, and $\capr^{*\top} \frac{\partial \capm}{\partial m_j} \capx^* = \capr_j^{*\top} x_j^*$.

\clearpage
\newpage

\bibliographystyle{plainnat}
\bibliography{me}

\clearpage
\newpage
\begin{figure}[pc!]
\centering
\includegraphics[,width=7in]{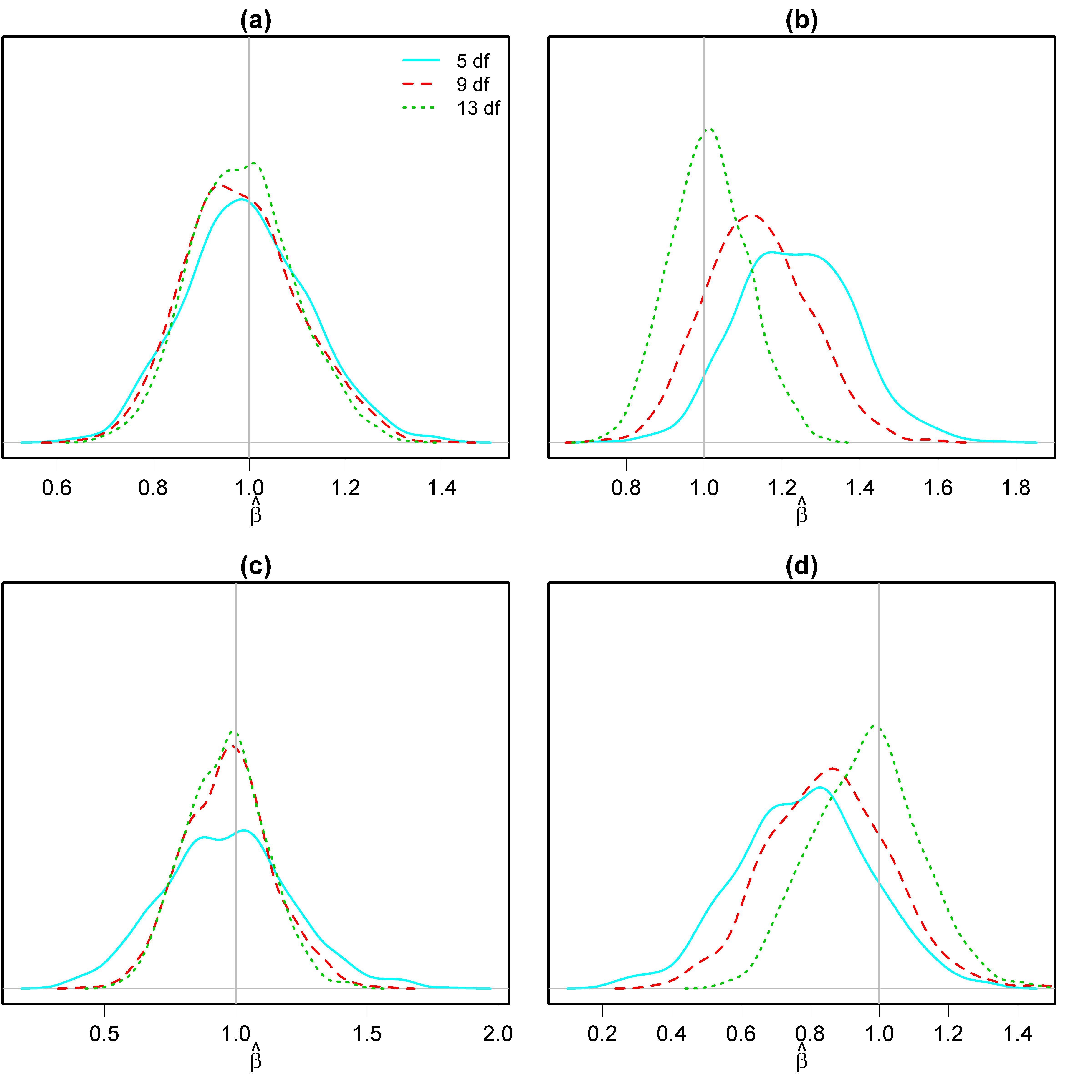}
\caption{Density plots of $\hat{\beta}$ show bias from $u_{i,BL}$ in the one dimensional simulation scenario with $n^*=1000$ when the compatibility conditions are not satisfied (1,000 Monte Carlo simulations for each scenario).  (a) Conditions~1 and~2 are satisfied: subject and monitor locations have the same 
density, and the health model is fit without subject specific covariates.  (b)  Condition 1 is violated: same as (a) except that $g(\s)$ is uniform on $(0,10)$.
(c) Conditions~1 and~2 are satisfied: subject and monitor locations have the same 
density, the health model is fit with a sinusoidal covariate, and the additional covariate is included in the exposure model.  
(d)  Condition 2 is violated: same as (c) except that the sinusoidal covariate is not included in the exposure model.} 
\label{fi:ubl}
\end{figure}

\clearpage
\newpage
\begin{figure}[pc!]
\centering
\includegraphics[,width=7in]{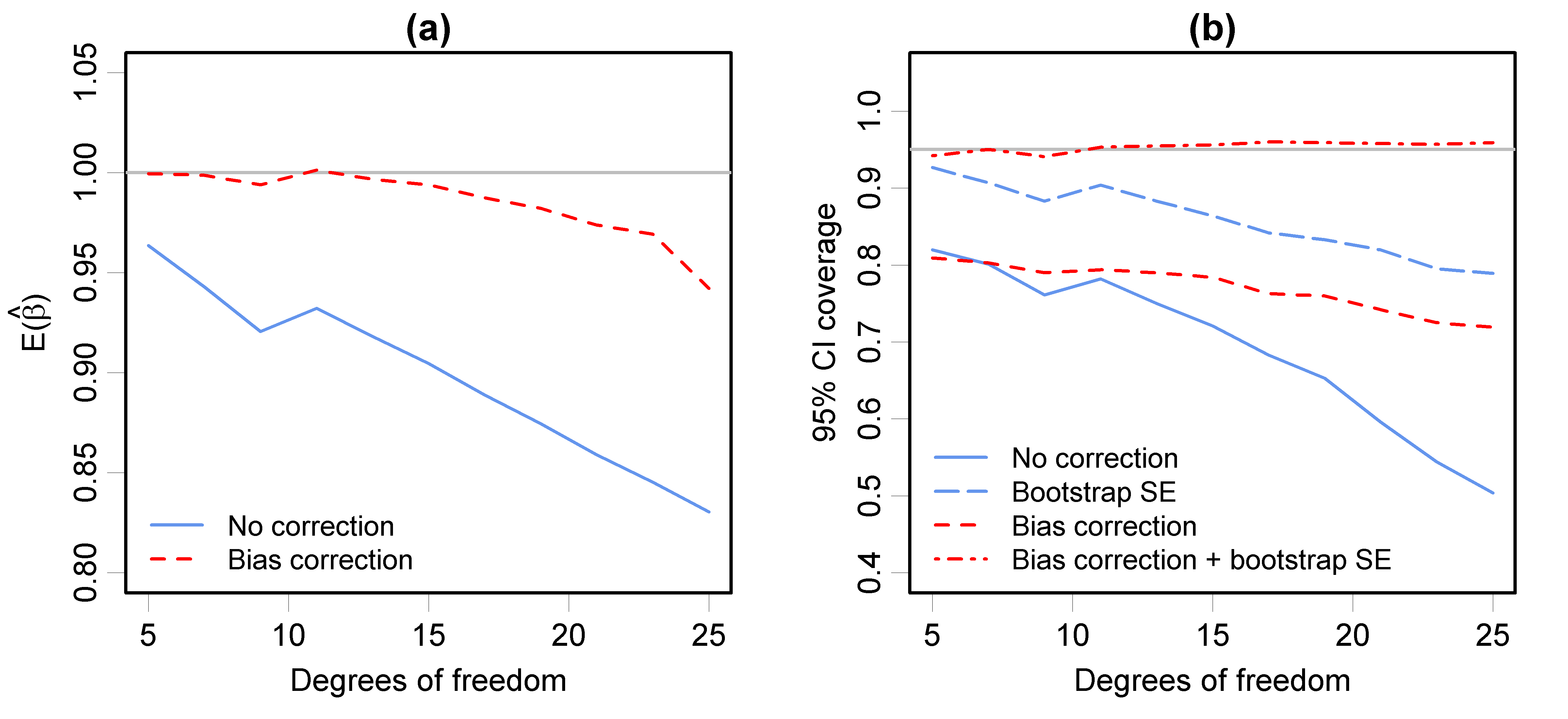}
\caption{ The two-step measurement error correction method adjusts for bias and gives valid standard error estimates
 in the one dimensional simulation scenario with $n^*=200$ (1,000 Monte Carlo simulations for each scenario and 100 bootstrap samples).  (a) Uncorrected health effect estimates have noticeable bias, especially for high dimensional exposure models.  Our analytical correction successfully adjust for the bias.  (b) The combination of bias adjustment and bootstrap standard errors gives 95\% confidence intervals with nominal coverage properties.  Neither bias adjustment alone nor bootstrap standard errors alone are sufficient.} 
\label{fi:corr}
\end{figure}

\clearpage
\newpage
\begin{figure}[pc!]
\centering
\[
\begin{array}{ccc}
\textrm{Scenario 1}&\textrm{5 df }(R^2=0.78)&\textrm{10 df }(R^2=0.91)\\
\includegraphics[,width=2.in]{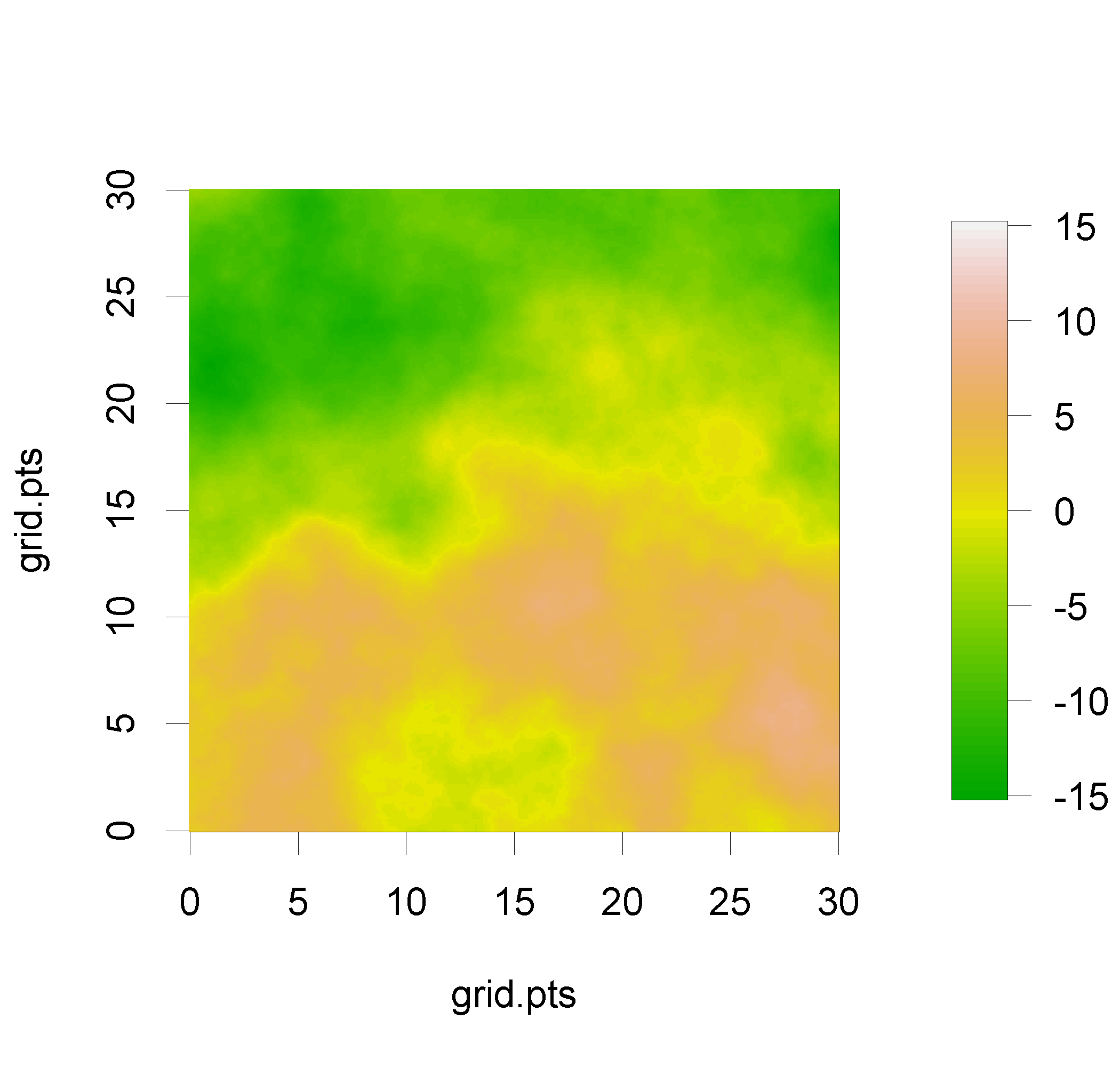}&
\includegraphics[,width=2.in]{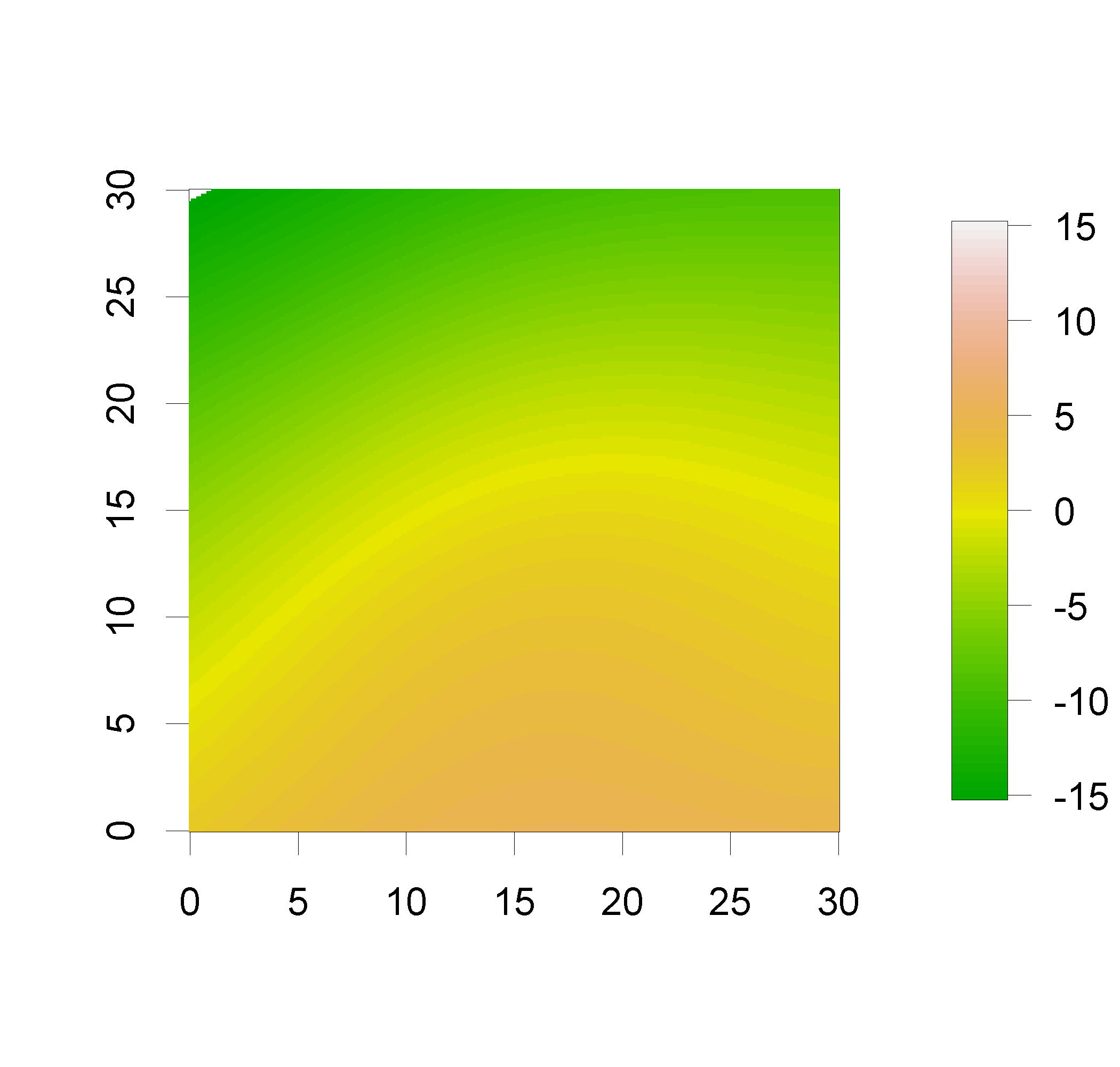}&
\includegraphics[,width=2.in]{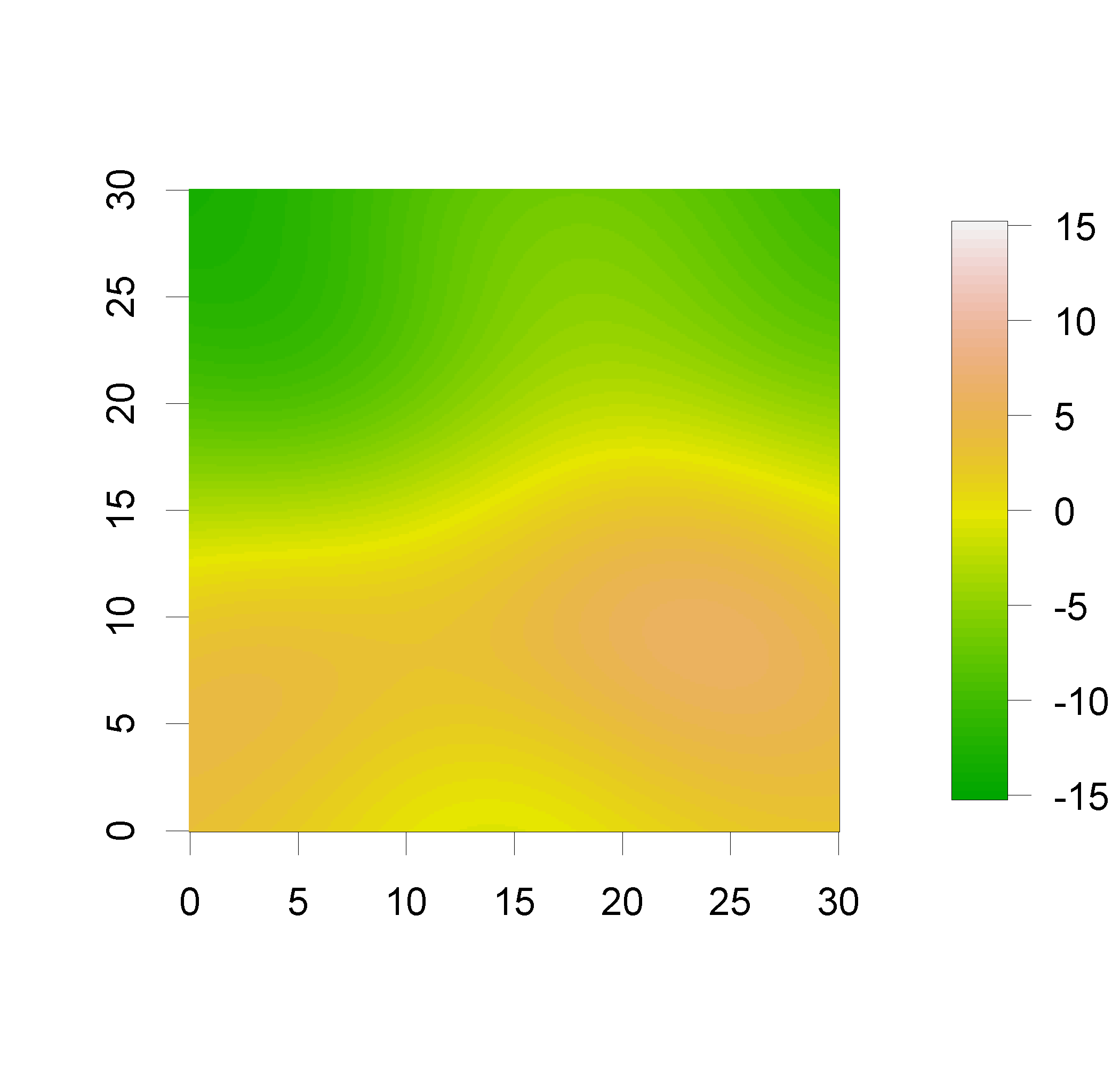}\\
\textrm{Scenario 2}&\textrm{5 df }(R^2=0.22)&\textrm{10 df }(R^2=0.63)\\
\includegraphics[,width=2.in]{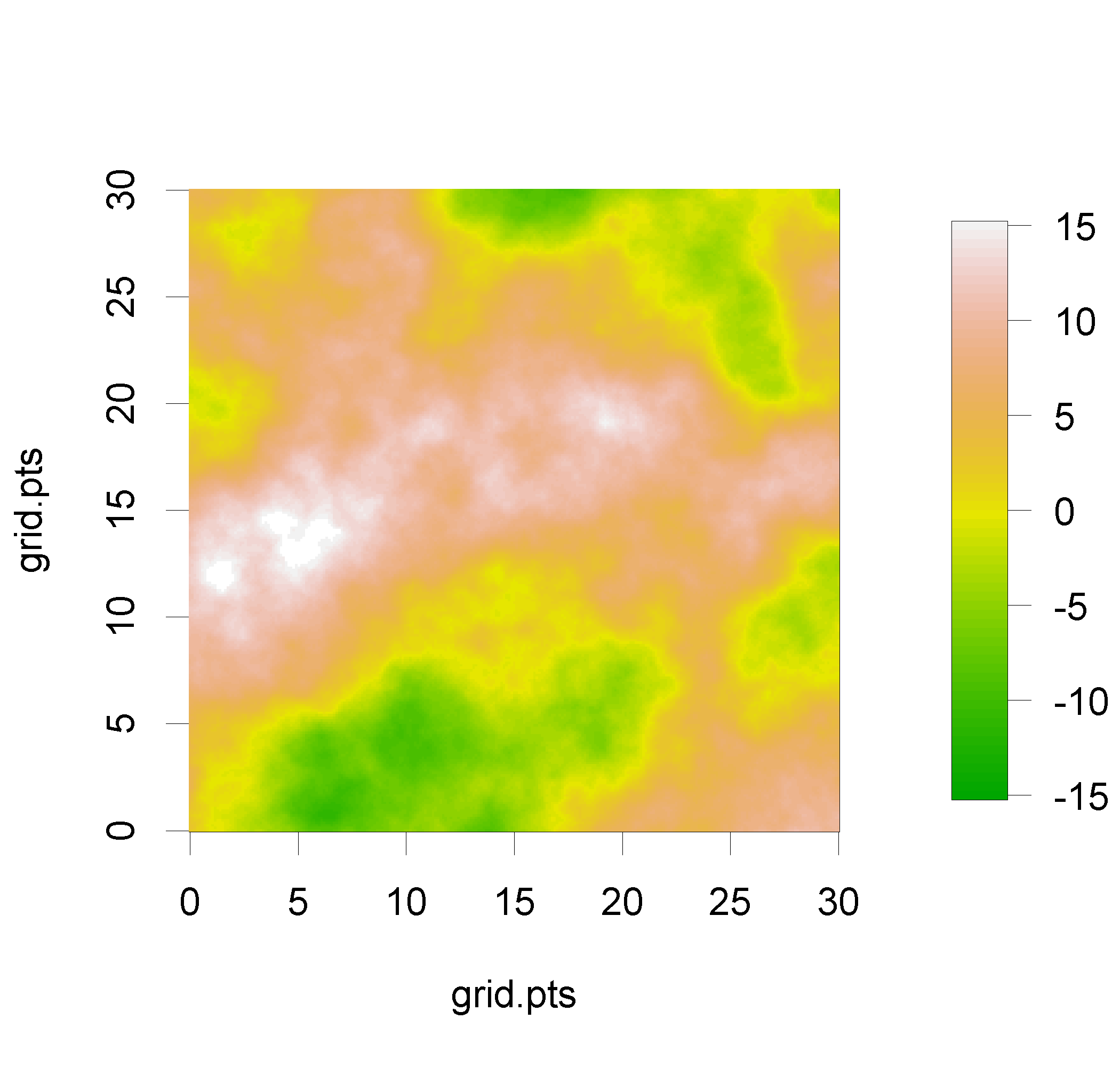}&
\includegraphics[,width=2.in]{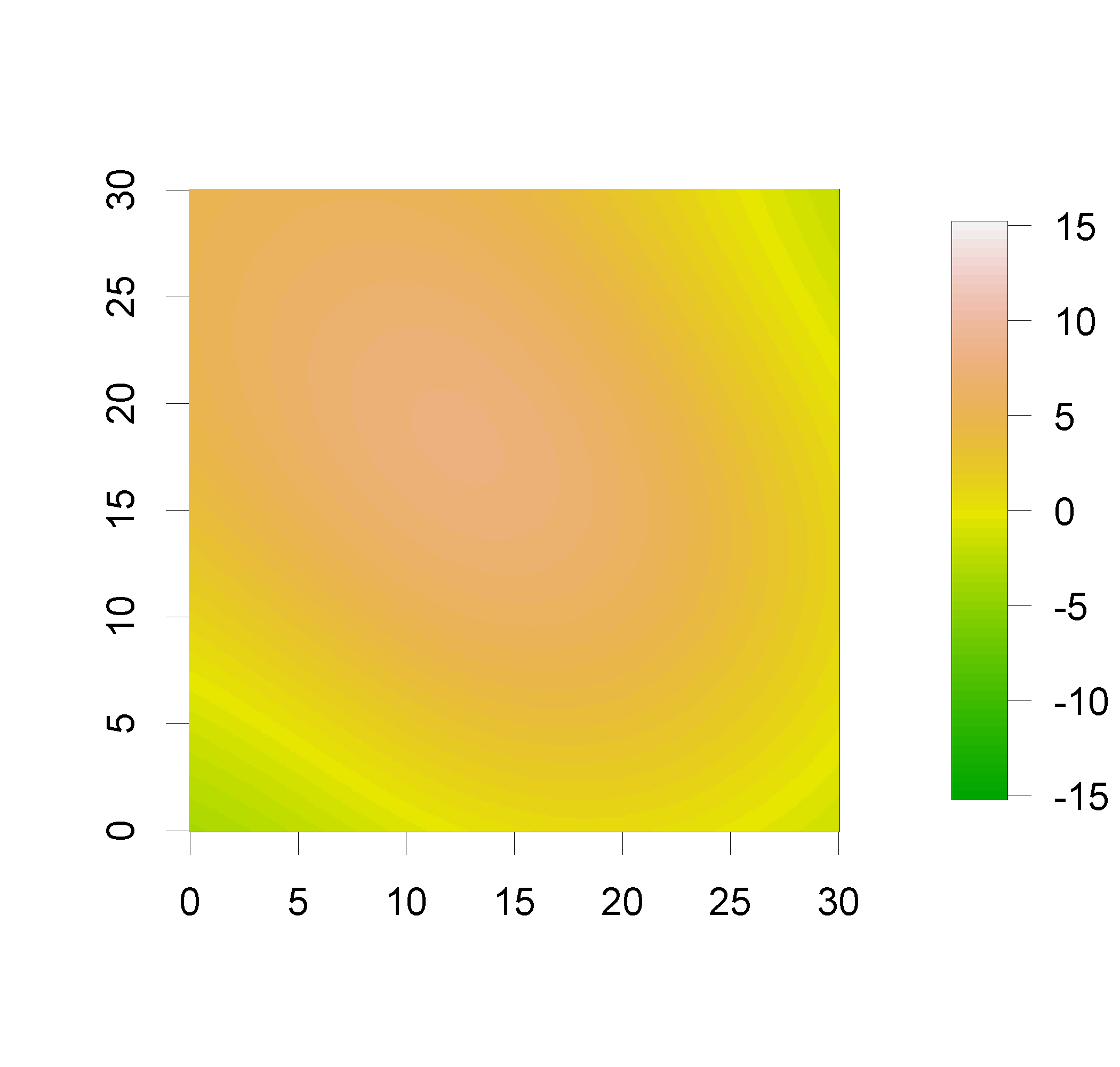}&
\includegraphics[,width=2.in]{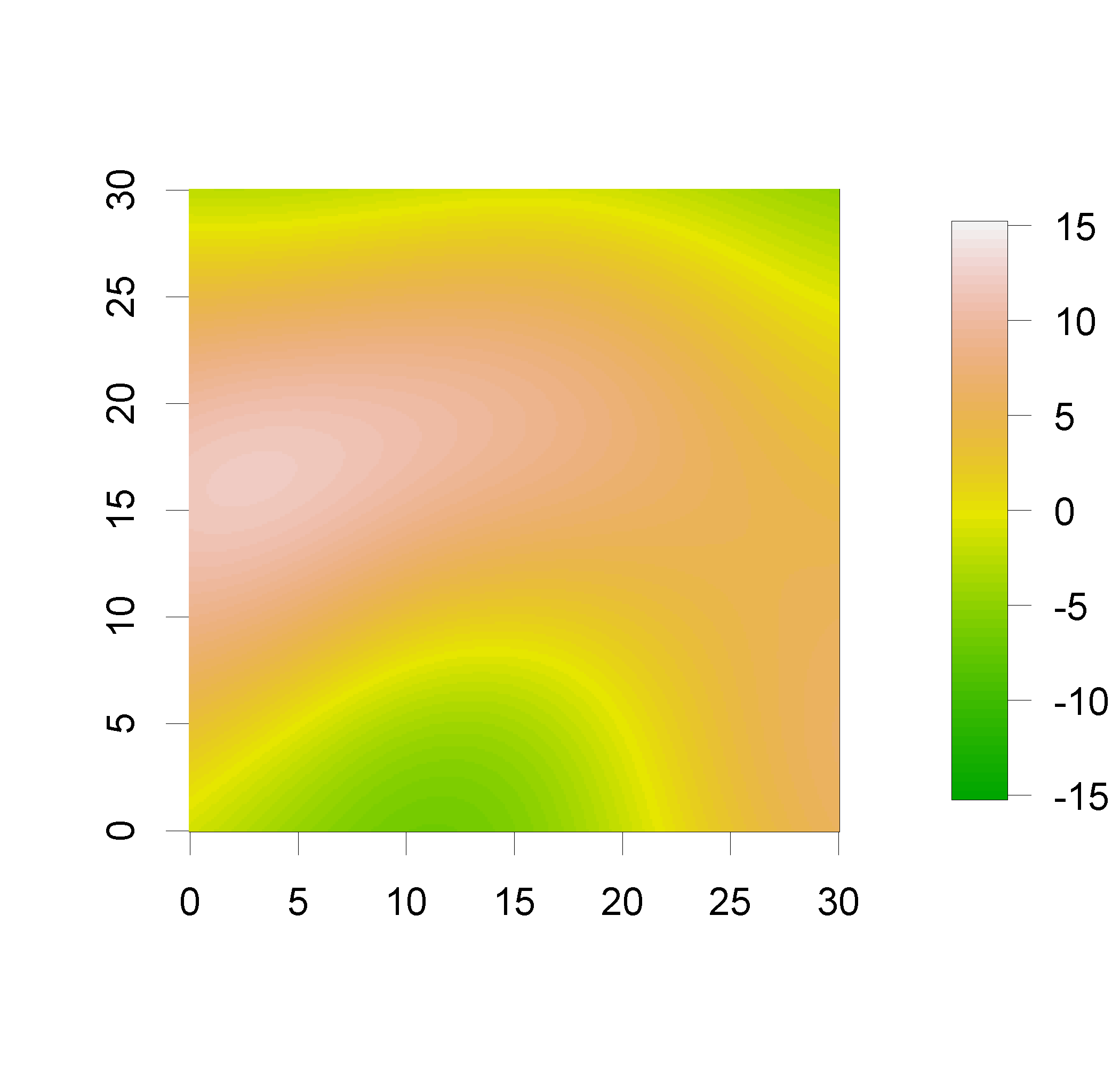}\\
\end{array}
\]
\caption{Spatial component of the exposure surface for each of two scenarios in the spatial simulation study based on MESA Air.
Each scenario corresponds to a different random draw of $\Phi_1(\cdot)$ from a \matern-based Gaussian process with range 20, differentiability
parameter 1, and variance on $S\subset \R^2$ of 30.  The first column is the true surface, and the second and third columns show
prediction surfaces and $R^2$ from approximating $\Phi_1(\s)$
with thin-plate splines using the indicated degrees of freedom, based on fitting a fixed degree-of-freedom spatial GAM model to  $\Phi_1(\s)$ with observations at every location on the 
$257 \times 257$ grid $S$.} 
\label{fi:spatial}
\end{figure}

\clearpage
\newpage
\begin{figure}[pc!]
\centering
\includegraphics[,width=7in]{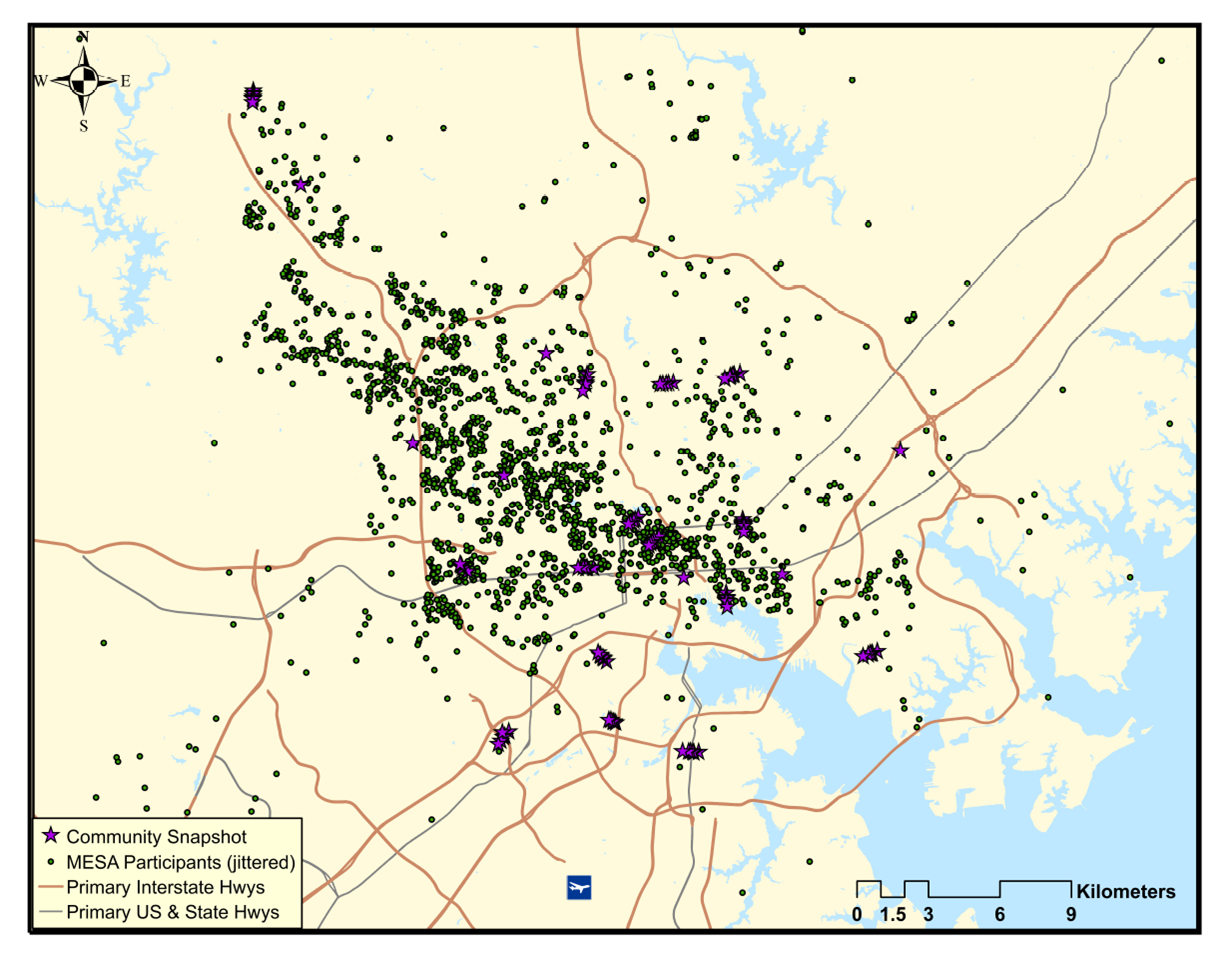}
\caption{Subject and monitor locations in the Baltimore MESA Air dataset.} 
\label{fi:mesa}
\end{figure}

\clearpage
\newpage
\begin{table}
\caption{Results for spatial simulation (1,000 Monte Carlo simulations for each scenario and 100 bootstrap samples).  
The average
out-of-sample $R^2$ is given in parentheses for each exposure model.  The first column is the relative bias in estimating $\beta=0.1$.  
This is the same for $\sigma^2=200$ and $\sigma^2=10$ and is estimated from 100,000 Monte Carlo
samples, resulting in negligible Monte Carlo error.
The final six columns show the standard deviation, average estimated standard error, and 95$\%$ confidence interval coverage, separately for 
$\sigma_\epsilon^2=200$ and $\sigma_\epsilon^2=10$.}
\small
\centering
\[
\ba{lrrrrrrr}
\noalign {\vspace{0.05cm}}
&&\multicolumn{3}{c}{\sigma^2_\epsilon=200}&\multicolumn{3}{c}{\sigma^2_\epsilon=10}\\
&\textrm{Rel Bias}&\textrm{SD}&\textrm{E(SE)}&\textrm{Cov}&\textrm{SD}&\textrm{E(SE)}&\textrm{Cov}\\
\hline
\noalign {\vspace{0.05cm}}
\textrm{Scenario }1\\
\hline
\noalign {\vspace{0.05cm}}
\textrm{5 degrees of freedom }(0.75)\\
\noalign {\vspace{0.05cm}}
\quad \textrm{no correction}&-0.027&0.084&0.083&94\%&0.02&0.019&93\%\\
\quad \textrm{bootstrap standard error only}&-0.027&0.084&0.084&95\%&0.02&0.021&95\%\\
\quad \textrm{bias correction only}&-0.009&0.086&0.083&94\%&0.021&0.019&93\%\\
\quad \textrm{bias correction + bootstrap}&-0.009&0.086&0.086&95\%&0.021&0.021&96\%\\
\noalign {\vspace{0.05cm}}
\textrm{10 degrees of freedom }(0.79)\\
\noalign {\vspace{0.05cm}}
\quad \textrm{no correction}&-0.039&0.08&0.08&95\%&0.019&0.018&93\%\\
\quad \textrm{bootstrap standard error only}&-0.039&0.08&0.082&96\%&0.019&0.027&98\%\\
\quad \textrm{bias correction only}&-0.025&0.081&0.08&94\%&0.019&0.018&93\%\\
\quad \textrm{bias correction + bootstrap}&-0.025&0.081&0.088&97\%&0.019&0.03&99\%\\
\hline
\noalign {\vspace{0.05cm}}
\textrm{Scenario }2\\
\hline
\noalign {\vspace{0.05cm}}
\textrm{5 degrees of freedom }(0.42)\\
\noalign {\vspace{0.05cm}}
\quad \textrm{no correction}&-0.125&0.099&0.096&94\%&0.025&0.022&87\%\\
\quad \textrm{bootstrap standard error only}&-0.125&0.099&0.097&95\%&0.025&0.026&90\%\\
\quad \textrm{bias correction only}&-0.049&0.108&0.096&93\%&0.028&0.022&86\%\\
\quad \textrm{bias correction + bootstrap}&-0.049&0.108&0.107&95\%&0.028&0.03&94\%\\
\noalign {\vspace{0.05cm}}
\textrm{10 degrees of freedom }(0.59)\\
\noalign {\vspace{0.05cm}}
\quad \textrm{no correction}&-0.102&0.087&0.085&93\%&0.021&0.019&88\%\\
\quad \textrm{bootstrap standard error only}&-0.102&0.087&0.085&94\%&0.021&0.03&94\%\\
\quad \textrm{bias correction only}&-0.061&0.091&0.085&92\%&0.023&0.019&88\%\\
\quad \textrm{bias correction + bootstrap}&-0.061&0.091&0.094&95\%&0.023&0.036&97\%\\
\ea
\]
\label{ta:linear.sim}
\end{table}

\clearpage
\newpage
\begin{table}[pc!]
\caption{Results of analysis analyzing the association between elevated left ventricular mass index (LVMI) and residential concentrations of NOx in
the MESA Air cohort in Baltimore, based on exposure models with land-use regression plus a thin-plate spline regression with varying degrees of freedom.  
The two columns display estimates and standard errors for the increase in LVMI ($g/m^2$) associated with a 10 ppb increase in NOx concentration.
Cross-validated $R^2$ at snapshot monitor locations for each exposure model are given in parentheses, based on leave-one-out cross-validation
modified to leave out all members of a roadway gradient cluster together. }
\centering
\[
\ba{lrrrr}
\noalign {\vspace{0.05cm}}
&\hat{\beta}&\textrm{SE}\\
\hline
\noalign {\vspace{0.05cm}}
\textrm{Land-use regression only (0.60)}\\
\quad \textrm{no correction}&0.66&0.62&\\
\quad \textrm{bootstrap standard error}&0.66&0.66&\\
\quad \textrm{bias correction + bootstrap}&0.68&0.68&\\
\textrm{5 degrees of freedom (0.68)}\\
\quad \textrm{no correction}&0.68&0.55&\\
\quad \textrm{bootstrap standard error}&0.68&0.62&\\
\quad \textrm{bias correction + bootstrap}&0.69&0.67&\\
\textrm{10 degrees of freedom (0.41)}\\
\quad \textrm{no correction}&0.79&0.69&\\
\quad \textrm{bootstrap standard error}&0.79&0.66&\\
\quad \textrm{bias correction + bootstrap}&0.85&0.78&\\
\ea
\]
 \label{ta:mesa}
\end{table}

\end{document}